\documentclass[aps,superscriptaddress,twocolumn,showpacs,prapplied]{revtex4-2}

\usepackage{amsmath}
\usepackage{latexsym}
\usepackage{amssymb}
\usepackage{graphicx}
\usepackage[colorlinks=true, citecolor=blue, urlcolor=blue]{hyperref}
\usepackage{float}
\usepackage{amsfonts}
\usepackage{textcomp}
\usepackage{mathpazo}
\usepackage{blindtext}

\sloppy

\usepackage{bbm}

\usepackage{xcolor}
\definecolor{myurlcolor}{rgb}{0,0,1}
\definecolor{mycitecolor}{rgb}{0,0.5,0}
\definecolor{myrefcolor}{rgb}{0.5,0,0}
\usepackage{hyperref}
\hypersetup{colorlinks,
linkcolor=myrefcolor,
citecolor=mycitecolor,
urlcolor=myurlcolor}

\usepackage[draft]{fixme}
\usepackage{amsmath,bbm}
\usepackage{scrextend}
\usepackage{graphicx}
\usepackage{amsfonts}
\usepackage{amssymb}
\usepackage{amsmath,amssymb,amsthm,verbatim,graphicx,bbm}
\usepackage{mathrsfs}
\usepackage{mathtools}
\usepackage{color,xcolor,longtable}




\def\ra{\rangle}
\def\la{\langle}



\def\be{\begin{equation}}
\def\ee{\end{equation}}
\def\ben{\begin{eqnarray}}
\def\een{\end{eqnarray}}
\def\eea{\end{array}}
\def\bea{\begin{array}}

\newcommand{\Tr}[1]{\mathrm{Tr}#1}
\newcommand{\bei}{\begin{itemize}}
\newcommand{\eei}{\end{itemize}}
\newcommand{\ket}[1]{|#1\rangle}
\newcommand{\bra}[1]{\langle#1|}

\newcommand{\proj}[1]{\ket{#1}\!\bra{#1}}
\newcommand{\braket}[2]{\langle{#1}|{#2}\rangle}

\def\ra{\rangle}
\def\la{\langle}

\newcommand{\I}{\mathbbm{1}}

\newcommand{\dl}[1]{\left|\!\left|#1\right|\!\right|}

\renewcommand{\emph}[1]{\textbf{#1}}


\makeatletter
\newtheorem*{rep@theorem}{\rep@title}
\newcommand{\newreptheorem}[2]{%
\newenvironment{rep#1}[1]{%
 \def\rep@title{#2 \ref{##1}}%
 \begin{rep@theorem}}%
 {\end{rep@theorem}}}
\makeatother

\theoremstyle{plain}
\newtheorem{thm}{Theorem}
\newtheorem*{thm*}{Theorem}
\newreptheorem{thm}{Theorem}
\newtheorem{lem}{Lemma}
\newtheorem{fakt}{Fact}

\theoremstyle{definition}

\theoremstyle{remark}


\usepackage[T1]{fontenc}


\begin{document}


\title{Almost device-independent certification of multipartite quantum states with minimal measurements}
\author{Shubhayan Sarkar}
\email{shubhayan.sarkar@ug.edu.pl}
\affiliation{Institute of Informatics, Faculty of Mathematics, Physics and Informatics,
University of Gdansk, Wita Stwosza 57, 80-308 Gdansk, Poland}

\author{Alexandre C. Orthey, Jr.}
\affiliation{Center for Theoretical Physics, Polish Academy of Sciences, Aleja Lotnik\'{o}w 32/46, 02-668 Warsaw, Poland}
\affiliation{Institute of Fundamental Technological Research, Polish Academy of Sciences, Pawi\'nskiego 5B, 02-106 Warsaw, Poland}

\author{Gautam Sharma}
\affiliation{Center for Theoretical Physics, Polish Academy of Sciences, Aleja Lotnik\'{o}w 32/46, 02-668 Warsaw, Poland}

\author{Saronath Halder}
\affiliation{Center for Theoretical Physics, Polish Academy of Sciences, Aleja Lotnik\'{o}w 32/46, 02-668 Warsaw, Poland}

\author{Remigiusz Augusiak}
\affiliation{Center for Theoretical Physics, Polish Academy of Sciences, Aleja Lotnik\'{o}w 32/46, 02-668 Warsaw, Poland}
\email{augusiak@cft.edu.pl}

\begin{abstract}	

Device-independent certification of quantum states enables the characterization of states within a device under minimal physical assumptions. A major problem in this regard is to certify quantum states using minimal resources. Aiming to address this problem, we consider a multipartite quantum steering scenario involving an arbitrary number of parties, of which only one is trusted, meaning that the measurements performed by this party are known. Consequently, the self-testing scheme is almost device-independent. Importantly, all the parties can only perform two measurements each, which is the minimal number of measurements required to observe any form of quantum nonlocality. Then, we propose steering inequalities that are maximally violated by three major classes of genuinely multipartite entangled (GME) states: graph states of arbitrary local dimension, Schmidt states of arbitrary local dimension, and $N$-qubit generalized W states. Using the proposed inequalities, we then provide an almost device-independent certification of the above GME states. Restricting to qubits, we also lift our almost device-independent scheme to device-independent self-testing.

\end{abstract}


\maketitle

\section{Introduction}

Device-independent (DI) quantum information refers to a framework in quantum information theory where protocols and tasks are designed without relying on detailed knowledge or assumptions about the devices used to implement them. In other words, DI protocols aim to achieve certain quantum information processing tasks without specifying the internal workings or characteristics of the quantum devices involved. This approach is particularly valuable for enhancing security and reliability in quantum communication and computation tasks. By removing assumptions about the devices, DI protocols provide a higher level of security against potential attacks and imperfections in the devices themselves. One of the key resources in DI quantum information processing is Bell nonlocality \cite{Bell,Bell66}, which serves as a fundamental ingredient for DI tasks such as quantum cryptography \cite{DICrypto}, random number generation \cite{di4}, and state and measurement certification \cite{Yang, di3, di4}. 

The most complete form of DI certification is referred to as self-testing \cite{10.5555/2011827.2011830} (see also Ref. \cite{SupicReview}).It allows one to infer the structure of the underlying quantum system solely from the observed Bell-type correlations, without requiring any knowledge of the measurement implementation. Recently, a great deal of attention has been devoted to finding schemes to self-test entangled quantum states in the bipartite \cite{Scarani, Bamps, Projection, sarkar, Jed1} and multipartite \cite{Flavio, Mckague_2014, Wu_2014, remik23,sarkar2, Santos_2023, Allst, sarkar2023universal,NPJQI} frameworks. In particular, \cite{Allst} provides a scheme for self-testing any pure multipartite entangled quantum state and \cite{sarkar2023universal} provides a scheme to self-test any quantum state. Although both results are general, they are based on quantum networks which significantly impacts their 
practical applicability.

A major problem in this regard is to find the optimal self-testing scheme in terms of the number of measurements required to certify a particular quantum state from the observed correlations. Clearly, as self-testing is based on quantum nonlocality, the minimal number of measurements per party is two. In the bipartite regime, there are only a few schemes that utilize two measurements per observer \cite{Scarani, Bamps, sarkar} and even fewer schemes exist in the multipartite scenario \cite{Flavio, sarkar2}. Moreover, most of the strategies existing in the multipartite scenario are devoted to quantum states that are locally qubits. One of the main challenges lies in designing a suitable Bell inequality that is maximally violated by a given multipartite entangled quantum state. This task becomes even more complex when one aims to construct such an inequality using only two measurements per observer.

To reduce complexity, one can introduce physically motivated assumptions into the considered scenario, leading to the concept of semi-device-independent certification. For example, one may assume that one of the parties is trusted, meaning that their measurement devices are fully characterized. This leads to scenarios typically referred to as one-sided device-independent (1SDI), where the key resource enabling certification is a weaker form of quantum nonlocality known as quantum steering \cite{Wiseman, Ecaval, Quin}. Recent investigations have explored the potential of such scenarios for self-testing quantum states and measurements \cite{Supic, Alex, Bharti, Chen, sarkar6, sarkar3, Sarkar_2023,sarkar2023network}. For this work, we used the multipartite quantum steering framework introduced in \cite{Cavalcanti_2015}.  

Here, we address a significantly more challenging problem: certifying genuinely multipartite entangled (GME) states shared among an arbitrary number of parties with arbitrary local dimensions, using only a minimal number of measurements, namely, two per party. To simplify the problem we make the aforementioned assumption that a single party is trusted, thus, making our scheme almost device-independent. Furthermore, we restrict the parties to only choose two measurements each. Under these assumptions, we construct steering inequalities that are maximally violated by three relevant classes of GME states. We then prove that maximal violation of our inequalities can be used for certification 
of the corresponding states. Since, among the arbitrary number of parties only one is trusted, the presented results below are almost DI certificates of the quantum state. The first family of states we can certify in this way are graph states of arbitrary local dimension. Graph states form one of the most representative classes of states, which includes for instance the well-known Greenberger-Horne-Zeilinger (GHZ) \cite{GreenbergerHorneZeilinger1989GoingBeyond}, cluster \cite{PhysRevLett.86.910} or the absolutely maximally entangled (AME) \cite{helwig2013absolutelymaximallyentangledqudit} states. Graph states are key resources for many applications, just to mention measurement-based quantum computing \cite{PhysRevLett.86.5188,Raussendorf2003MBQC}, quantum metrology \cite{PhysRevLett.124.110502} or multipartite secret sharing \cite{PhysRevA.59.1829,PhysRevA.78.042309}.
The existing self-testing schemes for graph states are restricted to local dimension two \cite{Mckague_2014, Flavio} and three \cite{Santos_2023}. Here we go beyond these particular dimensions and provide strategies that allow one to certify graph states of arbitrary local dimension, yet at the cost that one of the measurement devices is trusted.

The second class of states we consider here are the Schmidt states of arbitrary local dimension, which also include the aforementioned GHZ states. The Schmidt states are useful for quantum metrology \cite{PhysRevA.82.012337} or multipartite secret sharing \cite{PhysRevA.59.1829}. A self-testing method for this class was provided in Ref. \cite{remik23}. The latter, however, requires more measurements per observer and thus is not optimal as far as physical implementations are concerned. Interestingly, our scheme can also be utilized to certify two measurements with the untrusted parties to be mutually unbiased using almost no entanglement across multiple parties. 

We finally provide a self-testing scheme for the generalized $W$ states (see below for a definition) shared among an arbitrary number of parties but only with local dimension two.
The generalized $W$ states belong together with the Schmidt states to a broader class of entangled symmetric multipartite states and find applications for instance in quantum metrology \cite{PhysRevA.78.042309}. While self-testing methods for the standard $N$-qubit $W$ state were already introduced in Refs. \cite{Wu_2014,remik23}, here, by making a mild assumption that only one party is trusted, we can significantly generalise these results to a multi-parameter class of genuinely entangled states. 

Finally, we show how the above results can be extended to the device-independent regime by taking inspiration from Ref. \cite{Allst}. More precisely, we consider an extra party which enables certification of the measurements of the trusted party from violation of the CHSH Bell inequality. 



\section{Preliminaries}

We consider a one-sided DI (1SDI) scenario consisting of $N$-parties, $A$ and $B_1,\ldots ,B_{N-1}$, which are also called Alice and Bobs. All the parties are located in space-like separated laboratories and share a quantum state $\rho_{A\mathbf{B}}$, where we denote $\mathbf{B}\equiv B_1\ldots B_{N-1}$. Each party has a measurement device performing measurements on their shares of $\rho_{A\mathbf{B}}$. We additionally assume that Alice's measurement device is trusted, meaning that it performs known measurements. Yet, we do not assume that her device performs full tomography over her share of $\rho_{A\mathbf{B}}$. On the other hand, the measurement devices that belong to $N-1$ Bobs are untrusted and we treat them as ``black boxes''. The scenario we just described is also called \textit{local steering} \cite{RevModPhys.92.015001}, in which the untrusted parties perform local measurements in their respective states to steer the state owned by the trusted party. 

In this work, we consider that the trusted party performs the measurements given by the $d-$dimensional generalization of the Pauli $Z$ and $X$ observables,
\begin{align}\label{Aliceobservables}
    Z_d=\sum_{i=0}^{d-1}\omega^i\ket{i}\!\bra{i},\qquad X_d=\sum_{i=0}^{d-1}\ket{i+1}\!\bra{i}
\end{align} 
which, for ease of notation, will be denoted as $Z$ and $X$, respectively. Fig. \ref{fig:scheme} illustrates the scenario considered. 

We denote the inputs and outputs of the parties $A$ and $B_1,\ldots,B_{N-1}$ with $x$ and $y_1,\ldots, y_{N-1}$ and $a$ and $b_1,\ldots,b_{N-1}$, respectively. For convenience, we use the notation $\mathbf{y}\equiv y_1,\ldots,y_{N-1}$ and $\mathbf{b}\equiv b_1,\ldots,b_{N-1}$. The correlations observed by the parties when they repeat their measurements many times are captured by the set of probability distributions 
\begin{equation}
\vec{p}=\{p(a,\mathbf{b}|x,\mathbf{{y}})\},
\end{equation}
where $p(a,\mathbf{b}|x,\mathbf{y})$ stands for the probability of obtaining results $a$ and $\mathbf{b}$ given that the measurements $x$ and $\mathbf{y}$ have been carried out. According to Born's rule, the latter is expressed as
\begin{align}\label{correlations}
    p(a,\mathbf{b}|x,\mathbf{y})=\Tr\left[\rho_{A\mathbf{B}}\left(M_{a|x}\otimes \bigotimes_{i=1}^{N-1} N^{(i)}_{b_i|y_i}\right)\right],
\end{align}
where $M_{a|x}$ and $N^{(i)}_{b_i|y_i}$ are the measurement operators representing the measurements of Alice and all Bobs $B_i$, respectively; recall that they are positive semi-definite and sum to identity on their corresponding Hilbert spaces.
\begin{figure}[t]
    \centering
    \includegraphics[width=\linewidth]{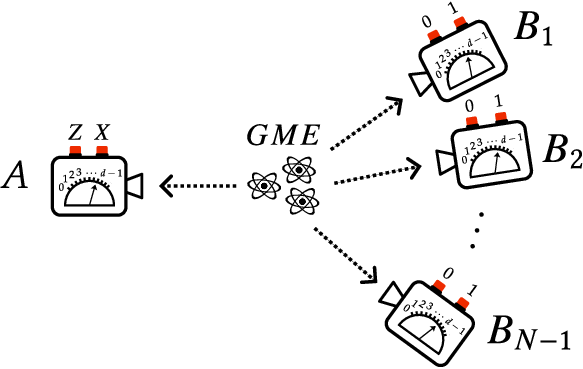}
    \caption{Depiction of the local steering scenario that we consider. A GME state is shared among $N$ parties: one Alice and $N-1$ Bobs. While Alice performs measurements $Z$ and $X$, all the other Bobs perform unknown measurements.}
    \label{fig:scheme}
\end{figure}

In our multipartite 1SDI scenario, the presence of local quantum steering is demonstrated if the probability distribution $p(a,\mathbf{b}|x,\mathbf{y})$ cannot be described by a local hidden state (LHS) model \cite{PhysRevLett.98.140402,RevModPhys.92.015001}, i.e., if it cannot be written as
\begin{align}\label{LHSmodel}
    p(a,\mathbf{b}|x,\mathbf{y})=\sum_{\lambda}p(\lambda)\Tr\left[M_{a|x}\rho_{A}^{\lambda}\right]p(\mathbf{b}|\mathbf{y},\lambda),
\end{align}
where $\lambda$ is a hidden variable that follows the probability distribution $p(\lambda)$. The hidden variable $\lambda$ determines the state $\rho^{\lambda}_A$ received by Alice and also the responses $p(\mathbf{b}|\mathbf{y},\lambda)$ obtained by $N-1$ Bobs. If the set of probability distributions $\{p(a,\mathbf{b}|x,\mathbf{y})\}$ cannot be described by the above model, it leads to violation of the linear steering inequality $\mathcal{B}\left[\{p(a,\mathbf{b}|x,\mathbf{y})\}\right]\leqslant \beta_L$, where $\mathcal{B}$ is a linear combination of $\{p(a,\mathbf{b}|x,\mathbf{y})\}$ and $\beta_L$ is the LHS bound, also called the classical bound \cite{PhysRevA.80.032112}, which is the maximal value of 
the functional $\mathcal{B}$ over the LHS models \eqref{LHSmodel}.

In this work, we express steering inequalities in terms of generalized expectation values, which are defined through $p(a,\mathbf{b}|x,\mathbf{y})$ in the following way
\begin{align}
\la A_{k|x}\mathbf{B}_{\mathbf{l}|\mathbf{y}}\ra=\sum_{a,\mathbf{b}=0}^{d-1}\omega^{a k+b_1l_1+\ldots+b_{N-1}l_{N-1}}p(a,\mathbf{b}|x,\mathbf{y}),
\end{align}
where $\mathbf{B}_{\mathbf{l}|\mathbf{y}}\equiv(B_1)_{l_1|y_1}\ldots(B_{N-1})_{l_{N-1}|y_{N-1}}$ with $k,l_1,\ldots, l_{N-1}\in\{0,\ldots,d-1\}$, and $\omega=\exp(2\pi \mathbbm{i}/d)$. If the probabilities $p(a,\mathbf{b}|x,\mathbf{y})$ express through Eq. (\ref{correlations}), we can write the expectation values as
\be
\la A_{k|x}\mathbf{B}_{\mathbf{l}|\mathbf{y}}\ra=\Tr\left[\rho_{A\mathbf{B}}A_x^k \otimes \bigotimes_{i=1}^{N-1} (B_i)_{l_i|y_i} \right],
\ee
where $A^k_x$ is the $k$-th power of the unitary observable $A_x$, 
which, to recall, we assume to be the generalized Pauli observables $Z$ and $X$. Then, $(B_i)_{l_i|y_i}$ are operators representing the measurements of $B_i$, which are defined through the Fourier transforms of the measurement operators $N^{(i)}_{b_i|y_i}$, i.e., 
\begin{equation}
(B_i)_{l_i|y_i}=\sum_{b_i}\omega^{b_il_i}N^{(i)}_{b_i|y_i}.   
\end{equation}
In fact, since we can write $N^{(i)}_{b_i|y_i}$ as an inverse Fourier transform of $(B_i)_{l_i|y_i}$, we can also refer to $\{(B_i)_{l_i|y_i}\}_{l_i=0}^{d-1}$ as the measurements of the $i$-th Bob. Moreover, if $N_{y_i}^{(i)}$ are projective, then $(B_i)_{l_i|y_i}$ are the $l_i$-th power of a unitary observable $B_{i,y_i}\coloneqq (B_i)_{1|y_i}$, that is, $(B_i)_{l_i|y_i}=\left(B_{i,y_i}\right)^{l_i}$ \cite{Jed1}.

In what follows, we will present new steering inequalities that are maximally violated by three classes of GME states: the $N$-qudit graph states, the $N$-qudit Schmidt states and the generalized $N$-qubit $W$ states. Recall that a GME state is one that cannot be written as a tensor product of two other pure states corresponding to an arbitrary bipartition of all the parties into two disjoint and non-empty subsets. More precisely, an $N$-partite state $\ket{\psi_{A\mathbf{B}}}$ is GME if $\ket{\psi_{A\mathbf{B}}}\neq \ket{\alpha}_Q\otimes\ket{\beta}_{\overline{Q}}$ for any two states $\ket{\alpha}_Q$, $\ket{\beta}_{\overline{Q}}$ and any bipartition of the parties $A$ and $B_1,\ldots,B_{N-1}$ into two non-empty and disjoint sets $Q$ and $\overline{Q}$.


Maximal violations of steering inequalities play a critical role in the self-testing results presented below. When the observed correlations $\vec{p}$ exhibit a maximal violation, they allow us to infer both the observer's state and the performed measurements up to local isometries. With that in mind, let us formally define multipartite 1SDI self-testing similar to the one introduced in \cite{sarkar6}. 
We assume here that the measurements of the untrusted parties are projective based on dilation arguments. At the same time, we do not assume the shared state $\rho_{A\mathbf{B}}$ to be purifiable by the unstrusted Bobs as due to the Schmidt decomposition between $A$ and $\mathbf{B}$ this would imply that the dimension of $\mathbf{B}$'s joint Hilbert space is the same as that of Alice, i.e., $d$. Yet, we can always consider a purification $\ket{\psi}_{A\mathbf{B}E}$ with an extra system $E$ such that $\rho_{A\mathbf{B}}=\Tr_E(\proj{\psi}_{A\mathbf{B}E})$, which is not possessed by any of the parties. 
Alice's measurements are projective and fixed as $A_x$, while the measurements of Bobs are arbitrary and represented by the operators $(B_{i})_{l_i|y_i}$. The goal is to certify that the state shared between Alice and Bob and the measurements performed by Bobs are equivalent, up to certain equivalences, to a reference state $\ket{\psi'}_{A\mathbf{B}'}\in (\mathbbm{C}^d)^{\otimes N}$ and to the reference measurements represented by $(B_{i})_{l_i|y_i}'$ defined on $\mathbbm{C}^d$, respectively, which all give rise to the same correlations $\vec{p}$. More formally, we say that $\ket{\psi'}_{A\mathbf{B}'}$ and $(B'_{i})_{l_i|y_i}$ are certified from the observed $\vec{p}$ if there exists a unitary $U_{i}:\mathcal{H}_{B_i}\to\mathcal{H}_{B_i}$ for each Bob $B_i$ such that
%
    \begin{equation}
       \left(\I_A\otimes \bigotimes_{i=1}^{N-1}U_i\right)\ket{\psi}_{A\mathbf{B}E}=\ket{\psi'}_{A\mathbf{B}'}\otimes\ket{\phi}_{\mathbf{B}''E},
\end{equation}
and
\begin{equation}
        U_i\, (B_{i})_{l_i|y_i}\, U_i^{\dagger} = (B'_{i})_{l_i|y_i}\otimes \I_{B_i''},\qquad \forall i
    \end{equation}
%
where Bob's Hilbert spaces decompose into tensor products
$\mathcal{H}_{B_i}=(\mathbbm{C}^{d})_{B_i'}\otimes\mathcal{H}_{B_i''}$, where $\mathcal{H}_{B_{i}''}$
is an auxiliary Hilbert space of unknown dimension and $\ket{\phi}_{\mathbf{B}''E}$ is a junk state from $\mathcal{H}_{\mathbf{B}''}\otimes\mathcal{H}_E$. Let us remark here that the measurements can only be certified on the local supports of the state, thus, without loss of generality, we assume that the local density matrices of the joint state $\ket{\psi}_{A\mathbf{B}E}$ are full-rank. 

\section{1SDI certification of multipartite quantum states}
\subsection{Graph states}

\begin{figure}[t]
    \centering
    \includegraphics[width=\linewidth]{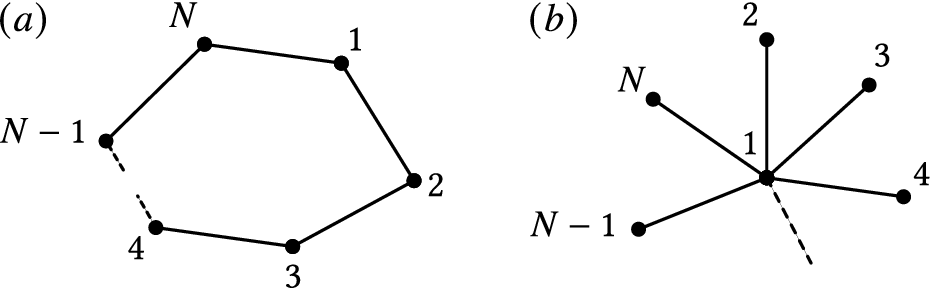}
    \caption{Examples of graph: (a) ring graph (b) star graph corresponding to $N-$qudit GHZ state up to local unitaries.}
    \label{fig:graph}
\end{figure}
Let us begin by defining graph states of arbitrary local dimension \cite{graphrev, graphrev2, graphrev3}. Consider a multigraph $G=(V,E,\Gamma)$ (cf. Fig. \ref{fig:graph}), where $V$ is a set of all the vertices of the graph, whereas $E$ is a set of pairs of vertices that are connected by edges. Finally, $\Gamma$ is a matrix with elements $\gamma_{i,j}\in\{0,\ldots,d-1\}$ specifying the multiplicities of the edges between vertices $i$ and $j$; notice that $\gamma_{i,j}=\gamma_{j,i}$ for any pair $i,j$ and $\gamma_{i,i}=0$. For our purpose, we consider graphs that are connected, that is, those in which there exists a path between any pair of vertices. Then, the graph state $\ket{G}\in(\mathbbm{C}^d)^{\otimes N}$ corresponding to the multigraph $G=(V,E,\Gamma)$, where $N$ is the number of vertices, is constructed as:
\begin{itemize}
    \item Each vertex of $G$ is associated with a single qudit state $\ket{+^d_i}=(1/\sqrt{d})\sum_{i=0}^{d-1}\ket{i}$, where $\{\ket{i}\}$ is the $d-$dimensional computational basis. The todal state of $N$ qudits is thus  
\begin{equation}
        \ket{+^d_V}=\bigotimes_{i=1}^N\ket{+_{i}^d}.
    \end{equation}
    \item For each edge $e=\{i,j\}$ connecting the vertices $i$ and $j$ of multiplicity $\gamma_{i,j}$, one applies the controlled unitary 
    \begin{equation}
        Z_{i,j}^{\gamma_{i,j}}=\sum_{q=0}^{d-1}\proj{q}_i\otimes \left(Z_j^{\gamma_{i,j}}\right)^q
    \end{equation}
is applied to the state $\ket{+^d_V}$.
    
\end{itemize}
Consequently, a graph state corresponding to the multigraph $G=(V,E)$ is defined as
\begin{equation}
    \ket{G}=\prod_{\{i,j\}\in E}Z_{i,j}^{\gamma_{i,j}}\ket{+^d_V}.
\end{equation}
An equivalent way to represent graph states is by using the stabilizer formalism \cite{graphuse4}. Let us consider a multigraph $G=(V,E)$ and associate to each 
vertex $i$ the following operator 
\begin{equation}\label{stab}
    \overline{S}(G)_{d,i}=X_i\otimes\bigotimes_{j\in n(i)} Z_j^{\gamma_{i,j}}\qquad  (i=1,2,\ldots,N),
\end{equation}
where $n(i)$ is the set of vertices connected to the vertex $i$, which is also called the neighborhood of $i$. The graph state $\ket{G}$ is the unique eigenstate of all the $N$ stabilizing operators $S_i(G)$ corresponding to the eigenvalue $1$. For example, the GHZ states are graph states.

Before presenting our first results, we note that fully device-independent schemes for graph states of local dimension two or three have already been proposed in Refs.~\cite{Mckague_2014, Flavio} and~\cite{Santos_2023}, respectively, whereas no scheme is known beyond these specific dimensions. Here, we provide strategies that allow for the certification of graph states of arbitrary local dimension, albeit under the assumption that one of the measurement devices is trusted.

Let us now move on to constructing steering inequalities that are maximally violated by graph states and then use these inequalities for self-testing. Let us consider a graph state $\ket{G}$ associated to a graph $G$. We construct the steering operator from the corresponding stabilizing operators \eqref{stab}. We then assume the first vertex to be trusted and label it as $0$; thus the measurement operators acting on the first site are known to be $Z_d$ and $X_d$ \eqref{Aliceobservables}. Next, we replace $Z_d$ and $X_d$ on the $j$-th vertex with arbitrary $d$-outcome observables $B_{j,0}$ and $B_{j,1}$, respectively. The steering operator is then given by
\begin{eqnarray}\label{graphsteeop_main}
    \mathcal{I}_d(G,N)&=&   X_A\otimes\bigotimes_{j\in n(0)}(B_{j,0})^{\gamma_{0,j}}\nonumber\\&&+\sum_{j\in n(0)}Z_{A}^{\gamma_{j,0}}\otimes B_{j,1}\otimes\bigotimes_{j'\in n(j)\setminus\{0\}}(B_{j',0})^{\gamma_{j,j'}}\nonumber\\
&&+\sum_{j\notin n(0)\cup {0}}B_{j,1}\otimes\bigotimes_{j'\in n(j)}(B_{j',0})^{\gamma_{j,j'}}+h.c.,\nonumber\\
\end{eqnarray}
where $h.c.$ denotes the Hermitian conjugate of the operator on its left-hand side. Notice that for a given graph $G=(V,E)$, the first term of the above operator corresponds to the stabilizer when choosing the vertex $0$, the second term consists of the stabilizers when choosing vertices connected to the $0$th vertex and the third term consists of rest of it. Also notice that in the above operator $\gamma_{i,j}=\gamma_{j,i}$ for any $i,j$. Thus, the steering inequality is given by
\begin{equation}\label{steingraph_main}
    \langle \mathcal{I}_d(G,N) \rangle\leqslant \beta_L,
\end{equation}
where $\beta_L$ is the maximum value attainable using the LHS model. In Fact 1 of Appendix \ref{AppendixA}, we provide an upper bound to $\beta_L$ in terms of a simple optimization problem that can be numerically evaluated for any graph $G$ and any number of parties $N$ and outcomes $d$. 

The quantum bound $\beta_Q$ of the steering functional $\langle \mathcal{I}_d(G,N) \rangle$ is equal to the number of terms it consists of, which is twice the number of the corresponding stabilizing operators, that is, $\beta_Q=2N$. Clearly, the quantum bound is attained by the graph state $\ket{G}$ and observables $B_{j,0}=Z$ and $B_{j,1}=X$ for any $j$. Importantly, it follows from Fact 1 of Appendix \ref{AppendixA} that $\beta_L<\beta_Q$, and consequently the proposed steering inequality is non-trivial for any graph state.

Now, if all the parties observe that $\langle \mathcal{I}_d(G,N) \rangle=2N$, then we can establish the following theorem. 
\begin{thm}
Assume that for a given graph $G$, the corresponding functional (\ref{graphsteeop_main})
reaches its maximal value $\langle\mathcal{I}_d(G,N)\rangle=2N$ for a state $\ket{\psi_{A\mathbf{B}E}}$ and observables $B_{i,y_i}$. Then, the following statement holds true for any $d$: $\mathcal{H}_{B_i}=(\mathbbm{C}^d)_{B'_i}\otimes\mathcal{H}_{B''_i}$, where $\mathcal{H}_{B''_i}$ is some finite-dimensional Hilbert space, and there exists a local unitary transformation on Bob's side $U_i:\mathcal{H}_{B_i}\rightarrow\mathcal{H}_{B_i}$, such that for any $i$,
%
%
\begin{equation}\label{Theo1.1m}
U_i\,B_{i,0}\,U_i^{\dagger}=Z\otimes\I_{B''_i},\quad U_i\,B_{i,1}\,U_i^{\dagger}=X\otimes\I_{B''_i},\ \ 
\end{equation}
where ${B''}$ denotes Bob's auxiliary system and the state $\ket{\psi}_{A\mathbf{B}E}$ is given by,
\begin{equation}\label{Theo1.2m}
\left(\I_{AE}\otimes \bigotimes_{i=1}^{N-1}U_i\right)\ket{\psi}_{A\mathbf{B}E}=\ket{G}_{A\mathbf{B}'}
\otimes\ket{\xi}_{\mathbf{B}''E},
\end{equation}
where $\ket{\xi}_{\mathbf{B}''E}$ denotes the auxiliary state corresponding to subsystems
$\mathbf{B}''$ and $E$.
\end{thm}

The proof of the above theorem is given in Appendix \ref{AppendixA}.

In practical scenarios, it is important to investigate the robustness of self-testing statements, such as those introduced in our work, against experimental imperfections, including noise in the preparation device and imperfect measurements. However, deriving analytical robustness bounds is typically a highly demanding task, especially for systems of arbitrary local dimension. In the simplest case of graph states of local dimension two, we can instead exploit Jordan’s lemma, which significantly simplifies the problem and allows us to obtain the following robustness bound.
%


\begin{thm} Assume that the steering inequality \eqref{steingraph} attains a value close to the maximal violation, $\langle\mathcal{I}_2(G,N)\rangle\geq \beta_Q-\varepsilon$, when the $0^\text{th}$ party Alice measures the observables $A_0=Z$ and $A_1=X$. Then, there exist local unitary operations $U_i$ such that  the following inequality is satisfied,
\begin{eqnarray}
    \dl{\ket{\psi'}_{A\mathbf{B}E}-\ket{G}_{A\mathbf{B}'}
\otimes\ket{\xi}_{\mathbf{B}''E}}\leq  (\sqrt{N}+g/2)\sqrt{\varepsilon},\nonumber\\
\end{eqnarray}
where, as before, 
\begin{equation}
 \ket{\psi'}_{A\mathbf{B}E}=\mathbbm{1}_{AE}\otimes \bigotimes_{i=1}^{N-1}U_i\ket{\psi} _{A\mathbf{B}E} , 
\end{equation}
$\ket{G}_{A\mathbf{B}'}$ is the graph state to be certified and $\ket{\xi}_{\mathbf{B}''E}$ is the auxiliary state. Finally $g$ is a factor depending on the particular graph (c.f. Theorem \ref{thm222} in Appendix \ref{AppendixA}).
\end{thm}
A detailed version of the theorem and the proof is given in Appendix \ref{AppendixA}.

\subsection{Schmidt states}
Every pure bipartite state $\ket{\psi}_{AB}$ admits a Schmidt decomposition, that is, there exist orthonormal bases $\ket{i}_A$ and $\ket{i}_B$ in $\mathcal{H}_A$ and $\mathcal{H}_B$, respectively, such that $\ket{\psi}_{AB}=\sum_i \alpha_i\ket{i}_A\otimes\ket{i}_B$, where $\alpha_i>0$ and $\sum_i \alpha_i^2=1$. 
An analogous decomposition in the multipartite setting is in general impossible. The multipartite pure states which can be represented via Schmidt decomposition are referred to as Schmidt states \cite{remik23}. Under the action of local unitary operators, an $N$-qudit Schmidt state $\ket{\psi(\pmb{\alpha})}_{A\mathbf{B}}$ can be brought to the following form 
\begin{align}\label{Schmidt}
\ket{\psi(\pmb{\alpha})}_{A\mathbf{B}}=\sum_{i=0}^{d-1}\alpha_{i}\ket{i}^{\otimes N},
\end{align}
$\pmb{\alpha}\equiv\alpha_1,\ldots,\alpha_{N}$ such that $0<\alpha_i<1$ for all $i$ and $\sum_{i=0}^{d-1}\alpha_i^2=1$. Because $\alpha_i>0$ for all $i$, the Schmidt rank of the state is maximal, which ensures that it is a truly $N$-qudit state. 

Importantly, a self-testing scheme for multipartite Schmidt states in the device-independent regime is already known and was proposed in Ref.~\cite{remik23}. That approach, however, requires more than two measurements per observer and is therefore suboptimal from an implementation perspective. Specifically, 
$N-1$ parties perform three measurements each, while the remaining party performs four. In this work, we exploit the assumption that one of the observers is trusted to reduce the number of measurements per observer to two.

Inspired by Ref. \cite{sarkar3}, let us now introduce a general class of steering inequalities that are maximally violated by the above Schmidt state for a given $\pmb{\alpha}$. For this purpose, we consider the following steering operator
\begin{eqnarray}\label{Stefn1_main}
\mathcal{S}_d(\pmb{\alpha},N)&=&\sum_{k=1}^{d-1}\left[\left(\sum_{i=1}^{N-1}A_0^{k}\otimes B_{i,0}^{k}\right)\right.\nonumber\\
&&+ \left.\gamma(\pmb{\alpha}) A_1^{k}\otimes\bigotimes_{i=1}^{N-1} B_{i,1}^{k}+\delta_k(\pmb{\alpha})A_0^{k}\right],
\end{eqnarray}
where the coefficients $\gamma(\pmb{\alpha})$ and $\delta_k(\pmb{\alpha})$ are given by
%
\begin{eqnarray}\label{gammadelta_main}
\gamma(\pmb{\alpha})&=&d\left[\sum_{\substack{i,j=0\\i\ne j}}^{d-1}\frac{\alpha_i}{\alpha_j}\right]^{-1},\\
\delta_k(\pmb{\alpha})&=&-\frac{\gamma(\pmb{\alpha})}{d}\sum_{\substack{i,j=0\\i\ne j}}^{d-1}\frac{\alpha_i}{\alpha_j}\omega^{k(d-j)}.
\end{eqnarray}
%
In our scenario, we consider that all Bobs perform two arbitrary measurements, denoted by $B_{i,0}$ and $B_{i,1}$, whereas Alice's measurement device is trusted and measures the generalized Pauli observables $A_0=Z$ and $A_1=X$.  

The corresponding steering inequality is given by 
\begin{eqnarray}\label{steinsch_main}
   \langle \mathcal{S}_d(\pmb{\alpha},N)\rangle\leqslant \beta_L.
\end{eqnarray}
Whereas it is a difficult task to analytically compute the maximal classical value of the expression $\beta_L$, in Fact 3 of Appendix \ref{AppendixB} we provide an upper bound on $\beta_L$ in terms of an optimization problem that can be numerically evaluated. Then, the quantum bound $\beta_Q$ of $\langle\mathcal{S}_d(\pmb{\alpha},N)\rangle$ is given by $\beta_Q=(d-1)(N-1)+1$ and can be attained by $\ket{\psi(\pmb{\alpha})}$ and observables $B_{j,0}=Z_d$ and $B_{j,1}=X_d$ for any $j$. Importantly, from Fact 3 we can also conclude that $\beta_L<\beta_Q$ and thus the proposed steering inequality is non-trivial.

Let us now state our result. 
\begin{thm}
Assume that the steering inequality \eqref{steinsch_main} is maximally violated, that is,
$\langle \mathcal{S}_d(\pmb{\alpha},N)\rangle=(d-1)(N-1)+1$. 
%
%
Then, the following statement holds true for any $d$:  $\mathcal{H}_{B_i}=(\mathbbm{C}^d)_{B'_i}\otimes\mathcal{H}_{B''_i}$, where $\mathcal{H}_{B''_i}$ is a finite-dimensional Hilbert space, and there exists a local unitary transformation on Bob's side $U_i:\mathcal{H}_{B_i}\rightarrow\mathbbm{C}^d\otimes\mathcal{H}_{B''_i}$ such that for any $i$,
\begin{equation}\label{Theo1.1schm}
U_i\,B_{i,0}\,U_i^{\dagger}=Z\otimes\I_{B''_i},\quad U_i\,B_{i,1}\,U_i^{\dagger}=X\otimes\I_{B''_i},
\end{equation}
where $B''_i$ denotes Bob's auxiliary system and the state $\ket{\psi}_{A\mathbf{B}}$ is given by
\begin{equation}\label{Theo1.2schm}
\left(\I_{AE}\otimes \bigotimes_{i=1}^{N-1}U_i\right)\ket{\psi_N}=\ket{\psi(\pmb{\alpha})}_{A\mathbf{B}'}
\otimes\ket{\xi}_{\mathbf{B}''E},
\end{equation}
where $\ket{\xi}_{\mathbf{B}''E}$ denotes the auxiliary state.
\end{thm}
The proof of the above theorem can be found in Appendix \ref{AppendixB}.

\subsection{$N$-qubit generalized $W$ states}

The third and last class of states that we consider comprises the $N$-qubit generalized $W$ states. As the name suggests, they represent a generalization of the well-known $W$ state $\ket{\psi_W}=(\ket{100}+\ket{010}+\ket{001})/\sqrt{3}$ and have the following form
\begin{align}\label{WN_main}
\ket{\psi_W(\pmb{\alpha})}=\sum_{l=0}^{N-1}\alpha_{l+1}\ket{0\ldots,0,1_l,0,\ldots,0},
\end{align}
such that $\alpha_i>0$ for every $i$ and $\sum_{i=1}^{N}\alpha_i^2=1$. We introduce a steering inequality that is maximally violated by the above state through the following steering operator
\begin{eqnarray}\label{steinWN_main}
\mathcal{W}_N(\pmb{\alpha})  &=& -2Z_A\otimes \bigotimes_{k=1}^{N-1}B_{k,0}+
\sum_{l=1}^{N-1}  \left[ Z_A \otimes \I_l\otimes\left(\I-P_l\right) \right.\nonumber\\
&&\left.   +\left(\gamma_l X_A \otimes B_{l,1}+\delta_lZ_A \otimes \I_{l}\right)\otimes P_l \right]\nonumber\\
&&+\sum_{l=1}^{N-1}  \left[ \I_A \otimes B_{l,0}\otimes\left(\I-P_l\right)    +\I_A \otimes \I_l\otimes P_l \right],\nonumber\\
\end{eqnarray}
where
\begin{align}\label{coeffN_main}
    \gamma_l=\frac{2\alpha_{l+1}\alpha_1}{\alpha_{l+1}^2+\alpha_1^2},\qquad \delta_l=\frac{\alpha_{l+1}^2-\alpha_1^2}{\alpha_{l+1}^2+\alpha_1^2},
\end{align}
for every $l\in \{1,\ldots,N-1\}$,
\begin{eqnarray}\label{Pl_main}
   P_l=\frac{1}{2^{N-2}} \bigotimes_{k=1, k \neq l}^{N-1}\left(\I_k+ B_{k,0}\right),
\end{eqnarray}
and $B_{k,0}$ and $B_{k,1}$ are the measurements acting on the $k$-th qubit. Similarly, $\I_l$ is the identity operator acting on the $l$-th qubit. The corresponding steering inequality is given by
\begin{eqnarray}\label{inequalityWN_main}
    \langle \mathcal{W}_N(\pmb{\alpha})\rangle\leqslant\beta_L.
\end{eqnarray}
While as before we are unable to compute the LHS bound $\beta_L$, we can prove the inequality to be nontrivial by demonstrating that observation of the maximal quantum value of $\langle \mathcal{W}_N(\pmb{\alpha})\rangle$, which is $\beta_Q=2N$ (see Appendix \ref{AppendixC} for a proof), allows one to self-test $\ket{\psi_W(\pmb{\alpha})}$. In fact, we can prove the following statement. 
%
%
\begin{thm}
Assume that $\langle \mathcal{W}_N(\pmb{\alpha})\rangle=2N$ for a state
$\ket{\psi}_{A\mathbf{B}E}$ and Bob's observables $B_{i,y_i}$. 
%
Then, $\mathcal{H}_{B_i}=\mathbbm{C}^2\otimes\mathcal{H}_{B''_i}$, where $\mathcal{H}_{B''_i}$ is some finite-dimensional Hilbert space, and there exists a local unitary transformation on Bob's side $U_i:\mathcal{H}_{B_i}\rightarrow\mathbbm{C}^2\otimes\mathcal{H}_{B''_i}$, such that
\begin{equation}\label{Theo3_measurements_main}
U_i\,B_{i,0}\,U_i^{\dagger}=Z\otimes\I_{B''_i},\quad U_i\,B_{i,1}\,U_i^{\dagger}=X\otimes\I_{B''_i}
\end{equation}
for every $i$, where ${B''}$ denotes Bob's auxiliary system and the state   $\ket{\psi}_{A\mathbf{B}E}$ is given by,
\begin{equation}\label{Theo3_states_main}
\left(\I_{AE}\otimes \bigotimes_{i=1}^{N-1}U_i\right)\ket{\psi}_{A\mathbf{B}E}=\ket{\psi_W(\pmb{\alpha})}_{A\mathbf{B}'}
\otimes\ket{\xi}_{\mathbf{B}''E},
\end{equation}
where $\ket{\xi}_{\mathbf{B}''E}$ denotes the auxiliary state.
\end{thm}
As before, the proof of this statement is deferred to Appendix C.

Let us notice that self-testing methods for the standard $N$-qubit $W$ state
(i.e., one for which $\alpha_l=1/\sqrt{N}$ for every $l=1,\ldots,N$) were introduced in Refs. \cite{Wu_2014,remik23}. Here, by making a mild assumption that a single party is trusted, we can significantly generalise these results to a multi-
parameter class of genuinely entangled states.

\section{Extending to device-independence}

We now show how our 1SDI certification methods for multipartite quantum states can be extended to the device-independent regime, restricting for simplicity to the case where the quantum states are locally qubits. To this end, we combine our quantum steering scenario with the network-assisted approach presented in Ref. \cite{Allst}. More precisely, we introduce an extra party Charlie in between Alice and Bobs and also an additional source of quantum particles distributing a state $\ket{\phi}_{AC}$ to Alice and Charlie (cf. Fig. \ref{fig:scheme}). The aim of Charlie is to play the CHSH game with Alice in order to certify her measurements and also to teleport the local state $\rho_A=\Tr_{\mathbf{B}}\rho_{A\mathbf{B}}$ of the multipartite state $\rho_{A\mathbf{B}}$ to Alice. Charlie performs three measurements (cf. Fig. \ref{fig2}), each having two outcomes. The correlations observed by all parties are described by 
$\{p(a,c,b_1,\ldots, b_{N-1}|x,z,y_1,\ldots
,y_{N-1})\}$ where $z\in\{0,1,2\}$ and $c\in\{0,1\}$ denote the input and output of Charlie, respectively. 
The first two Charlie's measurements labeled by $z=0,1$ are performed on Charlie's subsystem of $\ket{\phi_{AC}}$, whereas the third one on both Charlie's subsystems.
Let us now go through the steps in details.

\begin{figure}
    \centering
    \includegraphics[width=\linewidth]{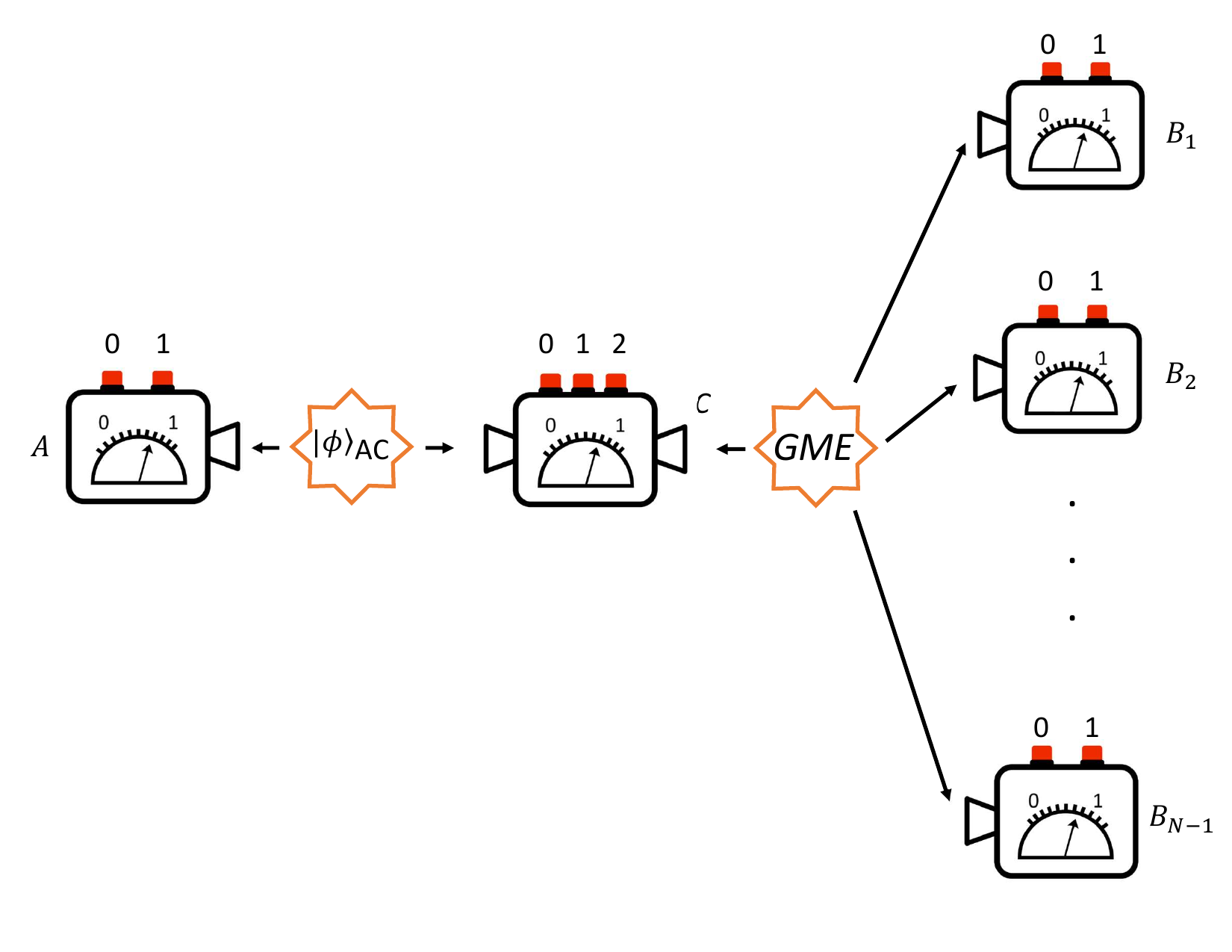}
    \caption{\textbf{Extended Bell scenario.} An additional party, Charlie, enforces trust in Alice by certifying his measurements. In the considered scenario there are two sources of states: one in between Alice and Charlie distributing bipartite states $\ket{\phi}_{AC}$ and one in between Charlie and $N-1$ Bob distributing $N$ partite states. }
    \label{fig2}
\end{figure}

\subsection*{Step I: Certifying Alice's observables}

To certify Alice's observables, Alice and Charlie simply play the Clauser-Horne-Shimony-Holt (CHSH) game \cite{CHSH} on the state distrubted by the bipartite source. If the correlations observed by them $\{p(a,c|x,y)\}$ violate the CHSH Bell inequality maximally, that is, they achieve the maximal quantum value of the functional 
\begin{equation}
    I_{\mathrm{CHSH}}:=\langle A_0C_0\rangle+\langle A_0C_1\rangle+\langle A_1C_0\rangle-\langle A_1C_1\rangle,
\end{equation}
which is $I_{\mathrm{CHSH}}=2\sqrt{2}$, then they infer that there exist local unitary operations $U_A$ and $U_C$ such that
\begin{eqnarray}\label{mea1}
    U_{A}\otimes U_{C}\ket{\psi}_{AC}=\ket{\phi^+}_{A'C'}\otimes\ket{\xi}_{A''C''},
\end{eqnarray}
where $\ket{\xi}_{A''C''}$ is some auxiliary state, and
\begin{eqnarray}\label{Alices}
U_{A}\,A_0\,U_A^{\dagger}=Z\otimes\I_{A''},\quad
    U_{B}\,A_1\,U_B^{\dagger}=X\otimes\I_{B''}.
\end{eqnarray}
Thus, Alice, by playing the CHSH game with Charlie, can certify her measurements to be, up to the above equivalences, the two Pauli observables which are used in the steering inequality \eqref{inequalityWN_main}.


\subsection*{Step II: Teleporting $\rho_{A}$ Charlie to Alice}

Once the state between Alice and Charlie is certified to be maximally entangled the local state $\rho_A$ of the multipartite state $\rho_{A\mathbf{B}}$ from Charlie to Alice using the third Charlie's measurement corresponding to $z=2$, which is performed on both systems received by Charlie. We only restrict our analysis to the first outcome of the Bell measurement, as no communication is allowed between Alice and Bob during the protocol. The reason for this teleportation step rather than directly sending the state $\rho_A$ to Alice is that the measurements of Alice are only certified on the local support of $\Tr_C\proj{\phi}_{AC}$. If the local support is different, then the measurement cannot be guaranteed [see \cite{sarkar3}]. Consequently, in the next step, we are only interested in the correlations $\{p(a,0,b_1,\ldots,b_N|x,2,y_1,\ldots,y_N)\}$, keeping in mind the self-tested state and measurements in the previous step.

\subsection*{Step-III: Self-testing with certified measurements}

In the last step Alice together with all Bobs can now use the 1SDI certification methods developed in the previous section to certify a given multipartite state in $\rho_{A\mathbf{B}}$. However, now the trusted observables $Z$ and $X$ must be replaced by those given in (\ref{Alices}), which have been previously certified by Alice and Charlie. 

Consider now an extended Bell scenario operator $\overline{\mathcal{S}}$ constructed, for instance, from the steering operator \eqref{steinWN_main} in which $Z$ and $X$ have been replaced by the measurements $A_1,A_2$ certified \eqref{mea1}. It is not difficult to observe that the new steering operator can be written as 
\begin{equation}
 \overline{\mathcal{S}}=U_{A}^{\dagger}(\mathcal{S}\otimes\I_{A''})U_{A},
\end{equation}
where $\mathcal{S}$ are any of the above-introduced steering operators. Consequently, the LHS and quantum bounds remain the same in the extended Bell and the standard steering scenario. Moreover, if $\langle\overline{\mathcal{S}}\rangle=\beta_Q$, then we readily have that $\langle\mathcal{S}\otimes\I_{A''}\rangle=\beta_Q$. Consequently, any state $\ket{\psi_{A\mathbf{B}}}$ attaining the maximal value of $\overline{\mathcal{S}}$ must also satisfy $\Tr(\mathcal{S}\rho_{A'\mathbf{B}})=\beta_Q$ where $\rho_{A'\mathbf{B}}=\Tr_{A''}(U_A\psi_{A\mathbf{B}})$. This is the exact condition that we utilise in the above 1SDI certification proofs, and thus it can be straightforwardly used here to obtain that 
\begin{eqnarray}
    U_{A}\otimes U_{\mathbf{B}}\ket{\psi_{A'\mathbf{B}E}}=\ket{\tilde{\psi}_{A'\mathbf{B'}}}\otimes\ket{\xi_{\mathbf{B''}E}}
\end{eqnarray}
where $\ket{\psi'}_{A\mathbf{B}}$ is the ideal state and the system $E$ is added to purify $\rho_{A'\mathbf{B}}$. Consequently, the state generated by the source is certified to be $U_{A}\otimes U_{\mathbf{B}}\ket{\psi_{A\mathbf{B}E}}=\ket{\tilde{\psi}_{A'\mathbf{B'}}}\otimes\ket{\xi_{A''\mathbf{B''}E}}$.
\section{Discussion}
In our work, we showed that while designing self-testing methods with a minimal number of measurements is highly challenging, introducing trust in one of the parties enables the construction of 1SDI schemes for a broad class of genuinely multipartite entangled (GME) states. We focused on graph states, Schmidt states, and generalised $W$ states. Another important finding is that using an arbitrarily low amount of entanglement it is possible to certify, in a single process, a pair of mutually unbiased operators for an arbitrary number of observers. 
In inspired by the results of Ref. \cite{Allst}, we also lifted the 1SDI certification to DI self-testing by including an additional party to certify the measurement of the trusted party. While in this work we have only considered the case when the trusted party performs two-outcome measurements, this approach technique can be straightforwardly generalised to a higher number of outcomes with the trusted party, provided a suitable Bell inequality exists which allows one to self-test the measurements of the trusted party. One can also utilise the network-based self-testing of any measurement presented in \cite{sarkar2023universal} to certify arbitrary Alice's measurements; however, such an approach will be highly complicated and difficult to implement.
We believe that our results open a path towards DI self-testing strategies based on the minimal number of measurements per party which are applicable to any genuinely entangled state. At the moment no such a scheme exists.  

Our work raises several follow-up questions. An obvious one is how to expand our approach to certify other classes of genuinely entangled multipartite states. In fact, it would be interesting to see whether the assumption considered here that one or few of the measurement devices are trusted allows one to design certification scheme for any GME state based on the minimal number of measurements per observes. A related question that arises here is whether the ideas presented above can be combined with the results of Ref. \cite{mancinska2021constantsized,volcic2023constantsized} to design self-testing schemes for multipartite states which are based on finite statistics. What is more, it would have been interesting to explore whether the 1SDI certification scheme for graph states of any local dimension provided here could be generalized to stabilizer subspaces (cf. Ref. \cite{PhysRevLett.125.260507}). At the same time, as far as the experimental testing of our schemes is concerned, it is crucial to understand how resilient they are to experimental errors. Additionally, we can use our self-testing results for secure multipartite randomness generation. 

\section{Acknowledgments}
 This project was funded within the QuantERA II Programme (VERIqTAS project) that has received funding from the European Union’s Horizon 2020 research and innovation programme under Grant Agreement No 101017733 and from the Polish National Science Center (project No 2021/03/Y/ST2/00175). Funded by the European Union. Views and opinions expressed are however those of the author(s) only and do not necessarily reflect those of the European Union or the European Commission. Neither the European Union nor the granting authority can be held responsible for them. SS also acknowledges the National Science Centre, Poland, grant Opus 25, 2023/49/B/ST2/02468.
RA also acknowledges received funding from the European Union’s Horizon Europe research and innovation programme under grant agreement No 101080086 NeQST.

\providecommand{\noopsort}[1]{}\providecommand{\singleletter}[1]{#1}%

\pagebreak

\onecolumngrid

\appendix


\setcounter{thm}{0}

\section{Self-testing any graph state}
\label{AppendixA}
Consider a connected graph $G$ with $N$ vertices and edges denoted as $e$ with their multiplicity denoted by $\gamma_{i,j}$. Then the graph state $\ket{G}$ corresponding to graph $G$ is the unique state that stabilizes the following $N$ operators
\begin{eqnarray}
     S(G)_{d,i}=X_i\otimes\bigotimes_{j\in n(i)} Z_j^{\gamma_{i,j}}\qquad  i=1,2,\ldots,N
\end{eqnarray}
 
 Let us now recall the steering operator introduced in the main text.
\begin{equation}\label{graphsteeop}
    \mathcal{I}_d(G,N)=X_A\otimes\bigotimes_{j\in n(0)}(B_{j,0})^{\gamma_{0,j}}+\sum_{j\in n(0)}Z_{A}^{\gamma_{j,0}}\otimes B_{j,1}\otimes\bigotimes_{j'\in n(j)\setminus\{0\}}(B_{j',0})^{\gamma_{j,j'}}+\sum_{j\notin n(0)\cup {0}}B_{j,1}\otimes\bigotimes_{j'\in n(j)}(B_{j',0})^{\gamma_{j,j'}}+h.c.
\end{equation}
where $h.c.$ denotes the Hermitian conjugate of the operator on its left-hand side. Also notice that in the above operator $\gamma_{i,j}=\gamma_{j,i}$ for any $i,j$ where $\gamma_{i,j}$ is the number of edges connecting vector $i,j$ and $n(j)$ is the set of vertices connected to $j$ vertex. The cardinality of $n(j)$ is denoted as $\overline{n}_j$. 
Now, the steering inequality is given by
\begin{equation}\label{steingraph}
    \langle \mathcal{I}_d(G,N) \rangle\leqslant \beta_L.
\end{equation}
The classical bound $\beta_L$ is given below.
\begin{fakt}
The local hidden state (LHS) bound $\beta_L$ of the steering functional $\langle \mathcal{I}_d(G,N)\rangle$ where $\mathcal{I}_d(G,N)$ is given in \eqref{graphsteeop} is upper bounded by  
\begin{eqnarray}
    \beta_L\leqslant\max_{\rho}\left(2|\langle X_A\rangle_{\rho}|+2\sum_{j\in n(0)}|\langle Z_{A}^{\gamma_{j,0}}\rangle_{\rho}|\right)+2(N-\overline{n}_0-1)
\end{eqnarray}
where $Z$ and $X$ are given in eq. \eqref{Aliceobservables}. Furthermore, $\beta_L<2N$.
\end{fakt}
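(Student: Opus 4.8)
\textit{Proof strategy.} The plan is to bound $\langle\mathcal{I}_d(G,N)\rangle$ over all LHS models \eqref{LHSmodel} and to exploit that, once Alice's conditional state $\rho_A^{\lambda}$ is fixed, the remaining (Bob) factor of each monomial of $\mathcal{I}_d(G,N)$ is the average of a phase $\omega^{(\cdots)}$ over a genuine probability distribution and hence has modulus at most one. Writing $\langle\mathcal{I}_d(G,N)\rangle$ via \eqref{LHSmodel} and using $p(a,\mathbf{b}|x,\mathbf{y})=\sum_\lambda p(\lambda)\,\mathrm{Tr}[M_{a|x}\rho_A^{\lambda}]\,p(\mathbf{b}|\mathbf{y},\lambda)$, every term of the form $A_x^{k}\otimes\big(\text{product of }B\text{'s}\big)$ splits as $\sum_\lambda p(\lambda)\,\langle A_x^{k}\rangle_{\rho_A^{\lambda}}\,c_\lambda$ with $c_\lambda=\sum_{\mathbf{b}}\omega^{(\cdots)}p(\mathbf{b}|\mathbf{y},\lambda)$ and $|c_\lambda|\leqslant 1$, while the monomials of the third sum of \eqref{graphsteeop}, which carry no Alice operator, split as $\sum_\lambda p(\lambda)\,c_\lambda$. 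The point to check here is that inside every monomial of \eqref{graphsteeop} each Bob index occurs at a single measurement setting (vertex $j$ at setting $1$, its neighbours at setting $0$, and no vertex is adjacent to itself), so the $c_\lambda$ really are expectations of the distributions $p(\cdot|\mathbf{y},\lambda)$.

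Taking real parts (the ``$+\,h.c.$'' doubles each contribution) and applying $\mathrm{Re}(z)\leqslant|z|$ term by term, the $\lambda$-slice of $\langle\mathcal{I}_d(G,N)\rangle$ is at most
\[
2\,|\langle X_A\rangle_{\rho_A^{\lambda}}|\;+\;2\!\!\sum_{j\in n(0)}\!\!|\langle Z_A^{\gamma_{j,0}}\rangle_{\rho_A^{\lambda}}|\;+\;2\,(N-\overline{n}_0-1),
\]
where $X_A$ comes from the first monomial, $Z_A^{\gamma_{j,0}}$ from the $j$-th monomial of the second sum, and the last term collects the $N-\overline{n}_0-1$ monomials of the third sum (one per vertex outside $\{0\}\cup n(0)$), each bounded by $2$ together with its conjugate. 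Averaging over $\lambda$ with weights $p(\lambda)$ and replacing the $\lambda$-dependent part by its maximum over all density operators $\rho$ on $\mathbb{C}^d$ gives exactly the claimed upper bound on $\beta_L$.

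For the strict inequality $\beta_L<2N$ it suffices to show $\max_\rho\big(|\langle X_A\rangle_\rho|+\sum_{j\in n(0)}|\langle Z_A^{\gamma_{j,0}}\rangle_\rho|\big)<\overline{n}_0+1$, since the bound above then falls strictly below $2(\overline{n}_0+1)+2(N-\overline{n}_0-1)=2N$. Each of the $\overline{n}_0+1$ summands is at most $1$, so the value $\overline{n}_0+1$ would require $|\langle X_A\rangle_\rho|=1$ and $|\langle Z_A^{\gamma_{j,0}}\rangle_\rho|=1$ for all $j\in n(0)$ at once. But $X_A=X_d$ is unitary with non-degenerate spectrum, so $|\langle X_A\rangle_\rho|=1$ forces $\rho=\proj{\chi}$ for an eigenvector $\ket{\chi}=\tfrac{1}{\sqrt d}\sum_k\omega^{ck}\ket{k}$ of $X_d$, and for such a state $\langle Z_d^{\gamma}\rangle_\rho=\tfrac{1}{d}\sum_k\omega^{\gamma k}=0$ whenever $\gamma\not\equiv 0\pmod d$; since $G$ is connected with $N\geqslant 2$, vertex $0$ has a neighbour $j$ with $\gamma_{j,0}\in\{1,\dots,d-1\}$, so $|\langle Z_A^{\gamma_{j,0}}\rangle_\rho|=0$ — a contradiction. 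As this function is continuous on the compact set of density operators, its maximum is attained and is therefore strictly below $\overline{n}_0+1$, which yields $\beta_L<2N=\beta_Q$. I expect this last step to be the main obstacle: one must rule out that the $X_d$- and $Z_d$-type expectations are simultaneously extremal, which is precisely the mutual unbiasedness of their eigenbases together with connectedness guaranteeing $n(0)\neq\emptyset$; Step~1 by contrast is routine triangle-inequality bookkeeping.
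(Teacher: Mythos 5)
Your proof is correct and follows essentially the same route as the paper for the main bound: decompose $\langle\mathcal{I}_d(G,N)\rangle$ over the LHS model, bound the modulus of each Bob factor by one (the paper cites \cite{sarkar6} for the factorized form $\langle A_x^k\otimes B_y^l\rangle_{LHS}=\sum_\lambda p(\lambda)\langle A_x^k\rangle_{\rho_\lambda}\langle B_y^l\rangle_\lambda$, which is exactly your $c_\lambda$ bookkeeping), and then push the maximization onto Alice's state. The one place where you genuinely add something is the strict inequality $\beta_L<2N$: the paper dispatches it in a single sentence ("$Z$ and $X$ are mutually incompatible and no state simultaneously stabilizes both"), whereas you make this precise by noting that $|\langle X_d\rangle_\rho|=1$ forces $\rho$ onto a non-degenerate eigenvector of $X_d$, for which $\langle Z_d^{\gamma}\rangle=0$ whenever $\gamma\not\equiv 0\pmod d$, and then invoking compactness to turn "the two extremes cannot coexist" into a strict bound on the attained maximum. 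That filled-in step is exactly what the paper's terse justification needs, so your version is, if anything, the more complete of the two.
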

\begin{proof}
As shown in \cite{sarkar6}, for any LHS model with trusted Alice, the expectation values of the joint observables take the form
\begin{eqnarray}\label{genexpLHS}
\langle A^k_{x}\otimes B^l_y\rangle_{LHS}=\sum_{\lambda}p(\lambda)\langle A^k_{x}\rangle_{\rho_{\lambda}}\langle B^l_y\rangle_\lambda.
\end{eqnarray}
Consequently, we can express the steering inequality $\langle \mathcal{I}_d(G,N)\rangle$, where $\mathcal{I}_d(G,N)$ is given in \eqref{graphsteeop}, for any LHS model as
\begin{eqnarray}
     \langle \mathcal{I}_d(G,N)\rangle_{LHS}=  \sum_{\lambda}p(\lambda)\langle \mathcal{I}_d(G,N)\rangle_{\lambda,LHS},
\end{eqnarray}
where $  \langle \mathcal{I}_d(G,N)\rangle_{\lambda,LHS}$ is
\begin{eqnarray}
     \langle \mathcal{I}_d(G,N)\rangle_{\lambda,LHS}=\langle X_A\rangle_{\rho_{\lambda}}\langle\bigotimes_{j\in n(0)}(B_{j,0})^{\gamma_{0,j}}\rangle_{{\lambda}}+\sum_{j\in n(0)}\langle Z_{A}^{\gamma_{j,0}}\rangle_{\rho_{\lambda}}\langle B_{j,1}\otimes\bigotimes_{j'\in n(j)\setminus\{0\}}(B_{j',0})^{\gamma_{j,j'}}\rangle_{{\lambda}}\nonumber\\+\sum_{j\notin n(0)\cup {0}}\langle B_{j,1}\rangle_{\rho_{\lambda}}\langle\bigotimes_{j'\in n(j)}(B_{j',0})^{\gamma_{j,j'}}\rangle_{{\lambda}}+\langle h.c\rangle_{LHS}.\qquad
\end{eqnarray}
As $\langle B^l_y\rangle_\lambda\leq1$, we obtain that
\begin{eqnarray}
     \langle \mathcal{I}_d(G,N)\rangle_{LHS}\leqslant \sum_{\lambda}p(\lambda)\left(2|\langle X_A\rangle_{\rho_{\lambda}}|+2\sum_{j\in n(0)}|\langle Z_{A}^{\gamma_{j,0}}\rangle_{\rho_{\lambda}}|\right)+2(N-\overline{n}_0-1),
\end{eqnarray}
which can be upper bounded as
\begin{eqnarray}
      \langle \mathcal{I}_d(G,N)\rangle_{LHS}\leqslant \sum_{\lambda}p(\lambda)\max_{\rho}\left(2|\langle X_A\rangle_{\rho}|+2\sum_{j\in n(0)}|\langle Z_{A}^{\gamma_{j,0}}\rangle_{\rho}|\right)+2(N-\overline{n}_0-1).
\end{eqnarray}
By using the fact that $\sum_{\lambda}p(\lambda)=1$, the above finally gives us
\begin{eqnarray}
     \langle \mathcal{I}_d(G,N)\rangle_{LHS}\leqslant \max_{\rho}\left(2|\langle X_A\rangle_{\rho}|+2\sum_{j\in n(0)}|\langle Z_{A}^{\gamma_{j,0}}\rangle_{\rho}|\right)+2(N-\overline{n}_0-1).
\end{eqnarray}
One can numerically evaluate the above quantity for any $d$ and any graph. Let us finally observe from the above expression that the LHS bound of the steering functional $\mathcal{I}_d(G,N)$ satisfies $\beta_L<2N$ because $Z$ and $X$ are mutually incompatible and there is no quantum state that can simultaneously stabilize $Z$ and $X$. 
\end{proof}
Let us now compute the quantum bound of the steering functional $\langle \mathcal{I}_d(G,N)\rangle$.
\begin{fakt}
    The quantum bound of the steering functional $\langle \mathcal{I}_d(G,N)\rangle$ where $\mathcal{I}_d(G,N)$ is given in \eqref{graphsteeop} is given by $\beta_Q=2N$ and is achieved by the graph state $\ket{G}$ with the observables 
    \begin{equation}
    B_{j,0}=Z,\qquad  B_{j,1}=X.
\end{equation}
\end{fakt}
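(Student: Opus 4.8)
The plan is to establish the claim in two independent steps: first, that $\langle \mathcal{I}_d(G,N)\rangle \leqslant 2N$ holds for \emph{every} quantum state and \emph{every} admissible choice of Bob's observables, so that $\beta_Q \leqslant 2N$; and second, that this value is actually attained by $\ket{G}$ together with $B_{j,0}=Z$, $B_{j,1}=X$, which forces $\beta_Q = 2N$.

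For the upper bound I would use the standing assumption that all of Bob's measurements are projective, so that each $B_{j,y_j}$ is a unitary observable with $B_{j,y_j}^d=\mathbb{1}$; in particular every power $(B_{j,y_j})^{\gamma}$ is unitary, and so are $X_A$ and $Z_A^{\gamma}$. Consequently each of the ``forward'' summands of $\mathcal{I}_d(G,N)$ in \eqref{graphsteeop} is a tensor product of unitaries, hence itself a unitary operator $U_i$; there are exactly $N$ of them, namely one from the first term, $\overline{n}_0$ from the second sum, and $N-\overline{n}_0-1$ from the third. Since $U+U^\dagger$ has spectrum contained in $[-2,2]$ for any unitary $U$, the triangle inequality gives $\big\|\mathcal{I}_d(G,N)\big\| = \big\|\sum_{i=1}^N (U_i+U_i^\dagger)\big\| \leqslant 2N$, and therefore $\langle\mathcal{I}_d(G,N)\rangle\leqslant 2N$ on any state.

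For achievability I would substitute $B_{j,0}=Z$ and $B_{j,1}=X$ (acting on $\mathbb{C}^d$) into \eqref{graphsteeop} and identify each of the $N$ forward terms with one of the stabilizing operators \eqref{stab} of $\ket{G}$, with the trusted vertex relabelled $0$. The first term becomes $X_0\otimes\bigotimes_{j\in n(0)}Z_j^{\gamma_{0,j}}=S_0(G)$. For $j\in n(0)$ one has $0\in n(j)$, so splitting $n(j)=\{0\}\cup(n(j)\setminus\{0\})$ shows the corresponding term of the second sum equals $Z_0^{\gamma_{j,0}}\otimes X_j\otimes\bigotimes_{j'\in n(j)\setminus\{0\}}Z_{j'}^{\gamma_{j,j'}}=S_j(G)$ up to reordering of tensor factors. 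For $j\notin n(0)\cup\{0\}$ one has $0\notin n(j)$, and the corresponding term of the third sum is $X_j\otimes\bigotimes_{j'\in n(j)}Z_{j'}^{\gamma_{j,j'}}=S_j(G)$. Since $\{0\}$, $n(0)$, and $V\setminus(\{0\}\cup n(0))$ partition the vertex set, the $N$ forward terms realize precisely $S_1(G),\dots,S_N(G)$. As $\ket{G}$ is the common $+1$ eigenstate of every $S_i(G)$ and each $S_i(G)$ is unitary, we also get $S_i(G)^\dagger\ket{G}=S_i(G)^{-1}\ket{G}=\ket{G}$, so each Hermitian pair contributes $\langle S_i(G)+S_i(G)^\dagger\rangle_{\ket{G}}=2$; summing over $i$ gives $\langle\mathcal{I}_d(G,N)\rangle_{\ket{G}}=2N$, which together with the upper bound yields $\beta_Q=2N$.

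The norm estimate and the spectral fact for $U+U^\dagger$ are routine; the only step requiring genuine care is the combinatorial bookkeeping in the achievability argument — verifying that each vertex $i\neq 0$ is produced exactly once across the second and third sums, and that the $Z$-factor on the trusted vertex appears in a given term precisely when that vertex is adjacent to vertex $0$. I expect this matching of steering-functional terms with graph stabilizers, rather than any analytic difficulty, to be the main point to get right.
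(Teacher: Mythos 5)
Your proposal is correct and follows essentially the same route as the paper's proof: the upper bound comes from each forward summand being unitary (so each Hermitian pair contributes at most $2$ to the expectation value, giving $2N$ over the $N$ pairs), and achievability follows from identifying the forward terms with the stabilizers $S_i(G)$ evaluated on $\ket{G}$. You simply make explicit the term counting and the stabilizer matching that the paper states as ``straightforward to observe.''
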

\begin{proof}

Let us first observe that the number of terms in the steering operator \eqref{graphsteeop} is equal to twice the number of stabilizers $S(G)_{d,i}$ \eqref{stab}, that is, $2N$. Since the expectation value of each term is upper-bounded by $1$, thus, we have that $\langle \mathcal{I}_d(G,N)\rangle\leq2N$.
 
 Consider now that the unknown observables $B_{i,j}$ are given by
 \begin{eqnarray}
      B_{j,0}=Z,\qquad  B_{j,1}=X
 \end{eqnarray}
 along with the graph state $\ket{G}$, the state that is uniquely stabilized by the $S(G)_i$ operators. It is straightforward to observe that using these states and observables one can attain the quantum bound.    
\end{proof}

Now notice that the steering operator consists $2N$ terms and consequently, the algebraic bound of $\langle \mathcal{I}_d(G,N)\rangle$ is also $2N$ as all the operators $B^{\dagger}_{i,j}B_{i,j}=\I$. Thus, it is necessary that any state $\ket{\psi_N}$ and observables $B_{i,j}$ that achieves the quantum bound must satisfy the following conditions
\begin{equation}
    \langle\psi_N|\overline{S}(G)_i|\psi_N\rangle=1,\qquad i={1,2,\ldots,N},
\end{equation}
where $\overline{S}(G)_i$ represents each term of the steering operator \eqref{graphsteeop}. Now, using Cauchy-Schwarz inequality, one can conclude from here that
\begin{equation}\label{ststab}
  \forall i\qquad   \overline{S}(G)_i|\psi_N\rangle=|\psi_N\rangle.
\end{equation}
The above condition would be extremely useful for certifying any graph state. Let us now establish the following theorem.
\begin{thm}
Assume that the steering inequality \eqref{steingraph} is maximally violated when the $0^\text{th}$ party Alice chooses the observables $A_i$ defined to be $A_0=Z$, $A_1=X$ and all the other parties choose their observables as $B_{i,j}$ for  $i=1,2\ldots,N-1$ and $j\in \{0,1\}$ acting on $\mathcal{H}_{B_i}$. Let us say that the state which attains this violation is given by $\ket{\psi_N}_{A\mathbf{B}}\in\mathbbm{C}^d\otimes\bigotimes_{i=1}^{N-1}\mathcal{H}_{B_i}\otimes\mathcal{H}_E$, then the following statement holds true for any $d$:  $\mathcal{H}_{B_i}=\mathcal{H}_{B'_i}\otimes\mathcal{H}_{B''_i}$, where $\mathcal{H}_{B'_i} \equiv \mathbbm{C}^d$, $\mathcal{H}_{B''_i}$ is some finite-dimensional Hilbert space, and there exists a local unitary transformation on Bob's side $U_i:\mathcal{H}_{B_i}\rightarrow\mathcal{H}_{B_i}$, such that
\begin{eqnarray}\label{Theo1.1}
\forall j, \quad U_i\,B_{j,0}\,U_i^{\dagger}=Z_{B'_i}\otimes\I_{B''_i},\qquad U_i\,B_{j,1}\,U_i^{\dagger}=X_{B'_i}\otimes\I_{B''_i}
\end{eqnarray}
where ${B''}$ denotes Bob's auxiliary system and the state   $\ket{\psi_{N}}$ is given by,
\begin{eqnarray}\label{Theo1.2}
\left(\I_{AE}\otimes \bigotimes_{i=1}^{N-1}U_i\right)\ket{\psi_N}_{A\mathbf{B}E}=\ket{G}_{A\mathbf{B}'}
\otimes\ket{\xi}_{\mathbf{B}''E},
\end{eqnarray}
where $\ket{\xi}_{\mathbf{B}''E}\in\bigotimes_{i=1}^{N-1}\mathcal{H}_{B_i''}\otimes\mathcal{H}_E$ denotes the auxiliary state.
\end{thm}
\begin{proof}
Let us begin by considering the first term and second term of the steering operator $\eqref{graphsteeop}$ corresponding to the stabilizers when choosing the $0^\text{th}$ vertex and vertices connected to it respectively. As concluded in Eq. \eqref{ststab}, we have that
\begin{eqnarray}
X_A\otimes\bigotimes_{j\in n(0)}(B_{j,0})^{\gamma_{0,j}}\ket{\psi_N}=\ket{\psi_N},
\end{eqnarray}
and
\begin{eqnarray}
Z_{A}^{\gamma_{j,0}}\otimes B_{j,1}\otimes\bigotimes_{j'\in n(j)\setminus\{0\}}(B_{j',0})^{\gamma_{j,j'}}\ket{\psi_N}=\ket{\psi_N}\qquad \forall j\in n(0).
\end{eqnarray}
Let us now choose a particular $k\in n(0)$ and rewrite the above conditions as,
\begin{eqnarray}
X_A\otimes(B_{k,0})^{\gamma_{0,k}}\otimes\bigotimes_{j\in n(0)\setminus\{k\}}(B_{j,0})^{\gamma_{0,j}}\ket{\psi_N}=\ket{\psi_N},
\end{eqnarray}
and
\begin{eqnarray}
Z_{A}^{\gamma_{k,0}}\otimes B_{k,1}\otimes\bigotimes_{j'\in n(k)\setminus\{0\}}(B_{j',0})^{\gamma_{k,j'}}\ket{\psi_N}=\ket{\psi_N}.            
\end{eqnarray}
Since the observables $X$ and $Z$ are unitary and $\gamma_{0,k}=\gamma_{k,0}$, we have that
\begin{eqnarray}\label{stgraeq1}
(B_{k,0})^{\gamma_{0,k}}\otimes\bigotimes_{j\in n(0)\setminus\{k\}}(B_{j,0})^{\gamma_{0,j}}\ket{\psi_N}=X_A^{-1}\ket{\psi_N}
\end{eqnarray}
 and
 \begin{eqnarray}\label{stgraeq2}
B_{k,1}\otimes\bigotimes_{j'\in n(k)\setminus\{0\}}(B_{j',0})^{\gamma_{k,j'}}\ket{\psi_N}= Z_{A}^{-\gamma_{0,k}}\ket{\psi_N}.            
\end{eqnarray}
Now, multiplying $Z_{A}^{-\gamma_{0,k}}$ to the condition \eqref{stgraeq1} and then using Eq. \eqref{stgraeq2}, we get that
\begin{eqnarray}\label{stgraph3}
(B_{k,0})^{\gamma_{0,k}}B_{k,1}\otimes\left(\bigotimes_{j\in n(0)\setminus\{k\}}(B_{j,0})^{\gamma_{0,j}}\right)\left(\bigotimes_{j'\in n(k)\setminus\{0\}}(B_{j',0})^{\gamma_{k,j'}}\right)\ket{\psi_N}=Z_{A}^{-\gamma_{0,k}}X_A^{-1}\ket{\psi_N}.
\end{eqnarray}
Then, multiplying $X_A^{-1}$ to the condition \eqref{stgraeq2} and then using Eq. \eqref{stgraeq1}, we get that
\begin{eqnarray}\label{stgraph4}
B_{k,1}(B_{k,0})^{\gamma_{0,k}}\otimes\left(\bigotimes_{j'\in n(k)\setminus\{0\}}(B_{j',0})^{\gamma_{k,j'}}\right)\left(\bigotimes_{j\in n(0)\setminus\{k\}}(B_{j,0})^{\gamma_{0,j}}\right)\ket{\psi_N}= X_A^{-1}Z_{A}^{-\gamma_{0,k}}\ket{\psi_N}. 
\end{eqnarray}
Notice, that terms inside the large brackets in both the above formulas commute. Now, by using the fact that $\omega^{\gamma_{0,k}} Z_{A}^{-\gamma_{0,k}}X_A^{-1}= X_A^{-1}Z_{A}^{-\gamma_{0,k}}$, we can multiply \eqref{stgraph3} by $\omega^{\gamma_{0,k}}$ and subtract \eqref{stgraph4} to obtain that
\begin{eqnarray}
\left(\omega^{\gamma_{0,k}}(B_{k,0})^{\gamma_{0,k}}B_{k,1}-B_{k,1}(B_{k,0})^{\gamma_{0,k}}\right)\otimes\left(\bigotimes_{j\in n(0)\setminus\{k\}}(B_{j,0})^{\gamma_{0,j}}\right)\left(\bigotimes_{j'\in n(k)\setminus\{0\}}(B_{j',0})^{\gamma_{k,j'}}\right)\ket{\psi_N}=0.
\end{eqnarray}
As the terms inside the large brackets in the above expression are unitary and thus invertible, we arrive at a simple condition
\begin{eqnarray}
\left(\omega^{\gamma_{0,k}}(B_{k,0})^{\gamma_{0,k}}B_{k,1}-B_{k,1}(B_{k,0})^{\gamma_{0,k}}\right)\otimes\I_{A\mathbf{B}E\setminus\{B_k\}}\ket{\psi_N}=0.
\end{eqnarray}
By taking a partial trace over all the subsystems except the $k^\text{th}$ one, we get that
\begin{eqnarray}
\left(\omega^{\gamma_{0,k}}(B_{k,0})^{\gamma_{0,k}}B_{k,1}-B_{k,1}(B_{k,0})^{\gamma_{0,k}}\right)\rho_{k,N}=0,
\end{eqnarray}
where $\rho_{k,N}=\Tr_{A\mathbf{B}E\setminus\{B_k\}}(\proj{\psi_N})$. As discussed before, the local states of each party are full-rank and thus invertible. Consequently, we finally have that
\begin{eqnarray}\label{graphfinal}
\omega^{\gamma_{0,k}}(B_{k,0})^{\gamma_{0,k}}B_{k,1}=B_{k,1}(B_{k,0})^{\gamma_{0,k}}.
\end{eqnarray}
As proven in \cite{Jed1}, if two unitary observables $B_{k,0}$ and $B_{k,1}$, for which $B_{k,0}^d=B_{k,1}^d=\I$, satisfy relation \eqref{graphfinal}, then there exists a unitary $U_{k}:\mathcal{H}_{B_k}\rightarrow \mathcal{H}_{B_k}$ such that
\begin{eqnarray}
\quad U_k\,B_{k,0}\,U_k^{\dagger}=Z_{B'_k}\otimes\I_{B''_k},\qquad U_k\,B_{k,1}\,U_k^{\dagger}=X_{B'_k}\otimes\I_{B''_k}.
\end{eqnarray}
Similarly, we can reach the above conclusion for any vertex $j\in n(0)$. Using these derived observables, we then find the observables for vertices connected with any vertex $j\in n(0)$ in exactly the same manner as done above. Since any vertex is connected to at least one other vertex, we continue this until all the observables are found and we get that
\begin{eqnarray}\label{graphfinal1}
U_i\,B_{i,0}\,U_i^{\dagger}=Z_{B'_i}\otimes\I_{B''_i},\qquad U_i\,B_{i,1}\,U_i^{\dagger}=X_{B'_i}\otimes\I_{B''_i}\qquad\forall i. 
\end{eqnarray}
Let us now characterize the state $\ket{\psi_N}$ that achieves the quantum bound of the steering functional $\langle \mathcal{I}_d(G,N)\rangle$ \eqref{graphsteeop}. Notice now that by putting into the condition \eqref{ststab} the derived observables from \eqref{graphfinal1}, we get that
\begin{eqnarray}
\forall i\qquad   \overline{S}(G)_i|\psi_N\rangle=\left(\bigotimes_{i=1}^{N-1}U_i^{\dagger}\right) S(G)_{d,i,AB_1''\ldots B_N''}\otimes\I_{B_1''\ldots B_N''E}\left(\bigotimes_{i=1}^{N-1}U_i\right)|\psi_N\rangle=|\psi_N\rangle,
\end{eqnarray}
where $S(G)_{d,i}$ are the stabilizing operators of a graph state $\ket{G}$ of the form \eqref{stab}. As discussed above, the graph state $\ket{G}$ is the unique eigenstate of such operators with eigenvalue $1$. Consequently, we have
\begin{eqnarray}
\left(\bigotimes_{i=1}^{N-1}U_i\right)|\psi_N\rangle_{A\mathbf{B}E}=\ket{G}_{A\mathbf{B}'}
\otimes\ket{\xi}_{\mathbf{B}''E},
\end{eqnarray}
where $\ket{\xi}_{\mathbf{B}''E}\in\bigotimes_{i=1}^{N-1}\mathcal{H}_{B_i''}\otimes\mathcal{H}_E$ denotes the auxiliary state.
\end{proof}

Let us now restrict ourselves to $d=2$ and find the robust self-testing statement. Before proceeding, let us state a general Lemma which will be required for the robust proof.

\begin{lem}\label{lemma1}
 Consider two hermitian and unitary matrices $A_0,A_1$ acting on one the subsystems of a bipartite state $\ket{\psi}\in\mathcal{H}_A\otimes\mathcal{H}_B$. Assume then that $\dl{\{A_0,A_1\}\ket{\psi}}\leq\varepsilon_N$ for some $0<\varepsilon<1$, where we additionally assume that under Jordan's lemma both $A_0$ and $A_1$ consist of only nontrivial $2\times 2$ blocks. Then, there exists a local unitary transformation $U:
 \mathcal{H}_A\to \mathbbm{C}^2\otimes\mathcal{H}_{A''}$ such that  
 \begin{eqnarray}\label{cor11}
U\,A\,U^{\dagger}\ket{\psi'}=Z\otimes\I\ket{\psi'},\qquad  \dl{(U\, B\, U^{\dagger}-X\otimes\I)\ket{\psi'}}\leq 2\varepsilon,
 \end{eqnarray}
%
%
 where $\ket{\psi'}= U\ket{\psi}$, and for simplicity, we have omitted all the identities acting on the second subsystem $B$.
 \end{lem}
The proof of this lemma can be found in Appendix B of Ref. \cite{sarkar2025gap}. Let us now derive the robustness statement.

\begin{thm}\label{thm222} Let us consider a graph state $\ket{G}$ that we want to certify robustly. For this graph, we consider a hierarchy of nodes as: (0) $n(0)$: all nodes connected to the $0-$th node (trusted node), (1) $n_1(j)$: all nodes connected to some $j-$th such that $j\in n(0)$, which are not neighbors of the node $0$, nor the node $0$ itself, (2) $n_2(j)$: all nodes connected to some node in $n_1(j)$ such that they are not connected to $0$ and $n(j)$, etc. Since we consider graphs which are not disjoint, for each vertex there exists a path connecting it to the trusted node.

Assume now that the steering inequality \eqref{steingraph} attains a value close to the maximal violation $\langle\mathcal{I}_2(G,N)\rangle\geq \beta_Q-\varepsilon$, when the $0^\text{th}$ party (trusted) Alice chooses the observables $A_i$ defined to be $A_0=Z$, $A_1=X$. Then, the state $\ket{\psi}_{A\mathbf{B}}$ and the observables $B_{i,j}$ for $i=1,\ldots,N-1$ and $j=0,1$ are certified as
\begin{eqnarray}\label{CarelessWhisper}
    \dl{\ket{\psi'}_{A\mathbf{B}E}-\ket{G}_{A\mathbf{B}'}
\otimes\ket{\xi}_{\mathbf{B}''E}}\leq  \frac{2\sqrt{N}+g}{2}\sqrt{\varepsilon}
\end{eqnarray}
where $g=\sum_kn_ka_k$ is the number of nodes connected in the $k$th heirarchy and $a_k=(8/3)(4^{k+1}-1)$ 
and for node $i$ belonging to $k$th heirarchy,
\begin{eqnarray}\label{A412}
U_{B_i}\, B_{i,0}\,U_{{B_i}}^{\dagger}\ket{\psi'}=Z\otimes\I\ket{\psi'},\qquad  \dl{(U_{B_i}\, B_{i,1}\, U_{{B_i}}^{\dagger}-X\otimes\I)\ket{\psi'}}\leq a_k\sqrt{\varepsilon}, 
 \end{eqnarray}
 where $\ket{\psi'}=\bigotimes_{i=1}^{N-1}U_{{B_i}}\ket{\psi}$ and we have dropped the indices $A\mathbf{B}E$. 
    
\end{thm}
\begin{proof}
    Let us consider the steering operator $\mathcal{I}_2(G,N)$ given in Eq. \eqref{graphsteeop} which in the particular case of $d=2$ simplifies to 
\begin{eqnarray}\label{graphsteeop_maind2}
    \mathcal{I}_2(G,N)&=&  2\left[ X_A\otimes\bigotimes_{j\in n(0)}B_{j,0}+\sum_{j\in n(0)}Z_{A}\otimes B_{j,1}\otimes\bigotimes_{j'\in n(j)\setminus\{0\}} B_{j',0}+\sum_{j\notin n(0)\cup {0}}B_{j,1}\otimes\bigotimes_{j'\in n(j)}B_{j',0}\right],
\end{eqnarray}    
where the factor $2$ is a consequence of the fact that all local observables are hermitian in this case.

    Let us then observe that it admits the following sum-of-squares decomposition, $\beta_Q\I-\mathcal{I}_2(G,N)=\sum_{j=1}^N P_jP_j^{\dagger}$ where the operators $P_j$ are given by
\begin{eqnarray}\label{SOS_graph}
        P_0&=&\I-X_A\otimes\bigotimes_{i\in n(0)}B_{i,0},\nonumber\\
        P_j&=&\I-Z_{A}\otimes B_{j,1}\otimes\bigotimes_{j'\in n(j)\setminus\{0\}}B_{j',0}\quad \mathrm{for}\quad j\in n(0), \nonumber\\ P_j&=&\I-B_{j,1}\otimes\bigotimes_{j'\in n(j)}B_{j',0}\quad \mathrm{for}\quad j\notin n(0)\cup {0}.
    \end{eqnarray}
Consequently, when $\langle\mathcal{I}_2(G,N)\rangle\geq \beta_Q-\varepsilon$, we have that $\dl{P_j\ket{\psi}}\leq \sqrt{\varepsilon}$ for all $j$. 

Let us now choose the $k-th$ vertex such that $k\in n(0)$, take two of the above relations and expand using \eqref{SOS_graph} as
\begin{eqnarray}
    \dl{\ket{\psi}-X_A\otimes B_{k,0}\otimes\bigotimes_{j\in n(0)\setminus \{k\}}B_{j,0}\ket{\psi}}\leq\sqrt{\varepsilon},\quad  \dl{\ket{\psi}-Z_{A}\otimes B_{k,1}\otimes\bigotimes_{j'\in n(k)\setminus\{0\}}B_{j',0}\ket{\psi}}\leq\sqrt{\varepsilon}.\quad
\end{eqnarray}
Now, using the fact that $X,Z$ are hermitian and unitary for $d=2$ and that the vector norm is unitarily invariant, the above inequalities imply that
\begin{eqnarray}\label{A37}
     \dl{X_A\ket{\psi}-B_{k,0}\otimes\bigotimes_{j\in n(0)\setminus \{k\}}B_{j,0}\ket{\psi}}\leq\sqrt{\varepsilon},\qquad  \dl{Z_{A}\ket{\psi}-B_{k,1}\otimes\bigotimes_{j'\in n(k)\setminus\{0\}}B_{j',0}\ket{\psi}}\leq\sqrt{\varepsilon}.
\end{eqnarray}
Exploiting the unitary invariance of the vector norm once more we can multiply the expressions under the norm by $Z_{A}$ and $X_A$, respectively, to obtain
\begin{eqnarray}\label{rownanie1}
\dl{Z_{A}X_A\ket{\psi}-Z_{A}\otimes B_{k,0}\otimes\bigotimes_{j\in n(0)\setminus \{k\}}B_{j,0}\ket{\psi}}\leq\sqrt{\varepsilon},\nonumber\\  \dl{X_AZ_{A}\ket{\psi}-X_A\otimes B_{k,1}\otimes\bigotimes_{j'\in n(k)\setminus\{0\}}B_{j',0}\ket{\psi}}\leq\sqrt{\varepsilon}.
\end{eqnarray}

Let us now consider the second inequality in (\ref{A37})
and by using the unitary invariance of the norm as well as the fact that all observables are unitary, let us expand it to the following form
\begin{equation}\label{rownanie2}
\dl{Z_{A}\otimes B_{k,0}\otimes\bigotimes_{j\in n(0)\setminus \{k\}} B_{j,0}\ket{\psi}-B_{k,0} B_{k,1}\otimes\left(\bigotimes_{j'\in n(k)\setminus\{0\}}B_{j',0}\right)\left(\bigotimes_{j'\in n(k)\setminus\{0\}}B_{j',0}\right)\ket{\psi}}\leq\sqrt{\varepsilon}
\end{equation}
Then, we the aid of the triangle inequality for the vector norm, one shows that the first inequality in \eqref{rownanie1} and \eqref{rownanie2} imply that 
\begin{equation}
    \dl{Z_{A}X_A\ket{\psi}-B_{k,0}B_{k,1}\otimes\left(\bigotimes_{j\in n(0)\setminus \{k\}}B_{j,0}\right)\left(\bigotimes_{j'\in n(k)\setminus\{0\}}B_{j',0}\right)\ket{\psi}}\leq2\sqrt{\varepsilon}.
\end{equation}
Analogously, from by suitably modifying the first inequality 
in \eqref{A37} and combining it with the the second inequality in \eqref{rownanie1} through the triangle inequality, one can obtain
%
%
\begin{eqnarray}
%
\dl{X_AZ_{A}\ket{\psi}-B_{k,1}B_{k,0}\otimes\left(\bigotimes_{j\in n(0)\setminus \{k\}}B_{j,0}\right)\left(\bigotimes_{j'\in n(k)\setminus\{0\}}B_{j',0}\right)\ket{\psi}}\leq2\sqrt{\varepsilon}.
\end{eqnarray}
In deriving these inequalities we also exploited the fact that 
$B_{j,0}$ commute among each other if they act on different sites.

Now, using the fact that $Z_{A}X_A=-X_AZ_{A}$, and then exploiting the triangle inequality, we have that
\begin{eqnarray}\label{A43}
\dl{[B_{k,0}B_{k,1}+B_{k,1}B_{k,0}]\ket{\psi}}\leq4\sqrt{\varepsilon}\qquad (k=1,\ldots,n(0)).
\end{eqnarray}
To all the observables $B_{k,i}$ for $k\in n(0)$ and $i=0,1$ (which are those that are measured by the neighbors of the trusted party) we can apply Lemma \ref{lemma1} (where for simplicity we additionally assume that $B_{k,i}$ decompose under Jordan's lemma only into nontrivial $2\times 2$ blocks), to conclude that there exists a local unitary operation $U_i$ acting on $i$th node such that
%
%
 %
 %
 \begin{eqnarray}\label{A42}
U_{i}\, B_{i,0}\, U_{{i}}^{\dagger}\ket{\psi'}=Z\otimes\I\ket{\psi'},\qquad  \dl{(U_{i}\, B_{i,1}\, U_{{i}}^{\dagger}-X\otimes\I)\ket{\psi'}}\leq 8\sqrt{\varepsilon},
 \end{eqnarray}
where $\ket{\psi'}=\bigotimes_{i\in n(0)}U_{{i}}\ket{\psi}$. 

The above steps allow us to find the robust bounds to observables for the nodes $j$ directly connected to the trusted party $j\in n(0)$. Based on the relations in Eq. \eqref{A42}, we can follow the same reasoning as above to prove analogous relations for the observables measured by the neighbors of all nodes in $n(0)$ that do not themselves belong to $n(0)$ and are not the trusted party. To this aim, let us now consider one of such nodes $j\in n(0)$ and prove that relations similar to \eqref{A42} hold true for all nodes $k\notin n(0)\cup\{0\}$ connected to $j$. We can follow the same steps as above using the SOS $P_j$ and $P_k$ \eqref{SOS_graph} to get
\begin{eqnarray}
    \dl{\ket{\psi}-B_{k,1}\otimes B_{j,0}\otimes\bigotimes_{j'\in n(k)} B_{j',0}\ket{\psi}}\leq\sqrt{\varepsilon},\quad  \dl{\ket{\psi}-Z_{A}\otimes B_{j,1}\otimes B_{k,0}\bigotimes_{j'\in n(j)\setminus\{0,k\}}B_{j',0}\ket{\psi}}\leq\sqrt{\varepsilon}.\quad
\end{eqnarray}
As the observables $B_{j,0},B_{j,1}$ for $j\in n(0)$ are certified as in Eq. \eqref{A42}, this allows us to obtain 
\begin{eqnarray}
    \dl{\ket{\psi'}-B_{k,1}\otimes (Z\otimes\I)_j\otimes\bigotimes_{j'\in n(k)} B_{j',0}\ket{\psi'}}\leq\sqrt{\varepsilon},\quad  \dl{\ket{\psi'}-Z_{A}\otimes (X\otimes\I)_j\otimes B_{k,0}\bigotimes_{j'\in n(j)\setminus\{0,k\}}B_{j',0}\ket{\psi'}}\leq9\sqrt{\varepsilon}.\nonumber\\
\end{eqnarray}
Following the same steps from Eqs. \eqref{rownanie1} to \eqref{A43}, we obtain
\begin{eqnarray}\label{anticom2}
\dl{[B_{k,0}B_{k,1}+B_{k,1}B_{k,0}]\ket{\psi}}\leq20\sqrt{\varepsilon}\qquad (k\in n(j),\, k\notin n(0)\cup \{0\}).
\end{eqnarray}
This allows us to conclude from Lemma \ref{lemma1} that for each $i\in n(j), i\notin n(0)\cup \{0\}$, there exists a unitary operation $U_i$ such that
\begin{eqnarray}\label{A4211}
U_{i}\, B_{i,0}\, U_{{i}}^{\dagger}\ket{\psi''}=Z\otimes\I\ket{\psi''},\qquad  \dl{(U_{i}\, B_{i,1}\, U_{{i}}^{\dagger}-X\otimes\I)\ket{\psi''}}\leq 40\sqrt{\varepsilon},
 \end{eqnarray}
 where $\ket{\psi''}=\bigotimes_{i\in n(j)},i\notin n(0)\cup\{0\}U_{{i}}\ket{\psi}$. 

 We now define an hierarchy of nodes as: (0) $n(0)$: all nodes connected to the $0-$th node (trusted node), (1) $n_1(j)$: all nodes connected to some $j-$th such that $j\in n(0)$, which are not neighbors of the node $0$, nor the node $0$ itself, (2) $n_2(j)$: all nodes connected to some node in $n_1(j)$ such that they are not connected to $0$ and $n(j)$, etc. We proceed in this way until all vertices are taken care of; clearly, since we consider graphs which are not disjoint, for each vertex there exists a path connecting it to the trusted node. At each level the error in the relations \eqref{A42} and \eqref{anticom2} for the observables $B_{i,1}$ adds up as $8,40,168,\ldots$; one can check that this series is described $a_{k}=(8/3)(4^{k+1}-1)$ for $k-$th level of the hierarchy. More precisely, Bob's observables satisfy
\begin{equation}
    \dl{[B_{i,0}B_{i,1}+B_{i,1}B_{i,0}]\ket{\psi}}\leq a_k\sqrt{\varepsilon}\qquad (k\mathrm{th}\quad \mathrm{level}).
\end{equation}

 Let us again consider the steering operator $\mathcal{I}_2(G,N)$ \eqref{graphsteeop} and observe that $2\beta_Q[\beta_Q\I-\mathcal{I}_2(G,N)]=(\beta_Q\I-\mathcal{I}_2(G,N))^2$ where we used the fact that $\mathcal{I}_2(G,N)^2=\beta_Q\I$ where $\beta_Q=2N$. Consequently, from the assumption that $\langle\mathcal{I}_2(G,N)\rangle\geq \beta_Q-\varepsilon$, we infer the following inequality,
 \begin{eqnarray}
     \dl{[\beta_Q\I-\mathcal{I}_2(G,N)]\ket{\psi}}\leq 2\sqrt{N\varepsilon}.
 \end{eqnarray}
As $\mathcal{I}_2(G,N)$ is composed of the observables $B_{i,0},B_{i,1}$, which are certified as \eqref{A42}, we can rewrite the above formula using the triangle inequality as
\begin{eqnarray}\label{55}
    \dl{[\beta_Q\I-\mathcal{I}_2^{\mathrm{id}}\otimes\I_{B_1''\ldots B_{N-1}''}(G,N)]\ket{\psi'}}\leq 2\sqrt{N\varepsilon}+g\sqrt{\varepsilon}=\mu\sqrt{\varepsilon}
\end{eqnarray}
where $\mathcal{I}_2^{\mathrm{id}}=2\sum_{i=1}^{N}\overline{S}(G)_{2,i}$ [c.f. Eq. \eqref{stab}] and $\mu=2\sqrt{N}+g$ such that the factor $g$ takes into account the graph structure such that the errors add up based on the hierarchy of nodes defined below Eq. \eqref{A4211}.  Suppose that at $k$th heirarchy, there are $n_k$ nodes, then $g=\sum_kn_ka_k$.

Consider now the eigensystem of the ideal steering operator,
%
%
%
$\mathcal{I}_2^{\mathrm{id}}\ket{G_{l}}=\lambda_{l}\ket{G_{l}}$. It is simple to observe that the difference between the highest and second-highest eigenvalues is $4$ as the second-highest eigenvalue is attained by $\ket{G_{l}}$ where $l_k=1$ for some $k$ and $l_i=0$ for all $i$ such that $i\ne k$. Let us now express the above formula \eqref{55} using the eigensystem of $\mathcal{I}_l^{\mathrm{id}}$ as
\begin{eqnarray}\label{Aleja}
     \dl{\left[\sum_{s=1}^{2^N}(\beta_Q-\lambda_l)\proj{G_l}\otimes\I_{A_1''\ldots A_N''}\right]\ket{\psi'}}\leq \mu\sqrt{\varepsilon}. 
\end{eqnarray}

Let us now implement the following expansion of the 
state $\ket{\psi'}$
\begin{equation}
    \ket{\psi'}=\sum_{i=1}^{2^N}\alpha_i\ket{G_i}\ket{\xi_i},
\end{equation}
where $\ket{G_i}$ are the eigenstates of the ideal steering operator $\mathcal{I}^{\mathrm{id}}$,i.e., $\mathcal{I}^{\mathrm{id}}\ket{G_i}=\lambda_i\ket{G_i}$. Then, $\alpha_i$ are nonnegative coefficient such that $\sum_i\alpha_i^2=1$,
and $\ket{\xi_i}$ are some auxiliary vectors $\ket{\xi_i}\in \bigotimes_{i=1}^N\mathcal{H}_{B''_i}$. 
The fact that $\alpha_i$ can be taken nonnegative stems from 
the freedom in choosing the auxiliary vectors $\ket{\xi_i}$.
Let us also enumerate the eigenvalues $\lambda_i$ so that $\lambda_1=2N$ which is the quantum bound $\beta_Q$; then, the corresponding eigenstate $\ket{G_1}$ is the graph state to be self-tested, i.e., $\ket{G_1}\equiv \ket{G}$.

Using this decomposition, the expression
(\ref{Aleja}) allows one to deduce that
\begin{eqnarray}
 \dl{\sum_{l=1}^{2^N}(\beta_Q-\lambda_l)\alpha_l\ket{G_l}\ket{\xi_l}}\leq \mu\sqrt{\varepsilon}. 
\end{eqnarray}
which on expanding the left-hand side results in
\begin{eqnarray}
  \sum_{l=1}^{2^N}\alpha_l^2(\beta_Q-\lambda_l)^2\leq \mu^2\varepsilon.
\end{eqnarray}

Let us now consider the eigenvalues $\lambda_i$ of the ideal steering operator $\mathcal{I}^{\mathrm{id}}$. Clearly, the largest eigenvalue is exactly the quantum bound $\beta_Q=2N$
and therefore the corresponding term in the above sum vanishes.
It is then not difficult to observe that the next largest eigenvalue is $2N-4$, which allows us to write the following inequality 
\begin{equation}
    \sum_{l=1}^{2^N}\alpha_l^2(\beta_Q-\lambda_l)^2\geq 4\sum_{l=2}^{2N}\alpha_l^2,
\end{equation}
where we used the fact that $(\beta_Q-\lambda_l)^2\geq 4$ for every $l=2,\ldots,2^N$. Consequently, we have the following inequality
\begin{eqnarray}\label{robu32}
    \sum_{\substack{l=2}}^{2^N}\alpha_l^2\leq \mu^2\frac{\varepsilon}{4}.
\end{eqnarray}
We then use the fact that $\sum_{i}\alpha_i^2=1$ which together with (\ref{robu32}) implies that
$\alpha_{1}\geq\alpha_{1}^2\geq1-(18N)^2\frac{\varepsilon}{4}$.
Thus,
\begin{eqnarray}\label{65}
    \langle{\psi'}|(\ket{G}\ket{\xi_1})=\alpha_1\geq 1-\mu^2\frac{\varepsilon}{4},
\end{eqnarray}
which directly implies that
\begin{eqnarray}
    \dl{\ket{{\psi'}}-\ket{G}\ket{\xi_1}}=\sqrt{2\{1-\mathrm{Re}[ \langle \psi'|(\ket{G}\ket{\xi_1})]\}}=\sqrt{2(1-\alpha_1)}\leq 
    \frac{\mu}{2}\sqrt{\varepsilon},
\end{eqnarray}
which is the desired inequality \eqref{CarelessWhisper}. This completes the proof.
    
\end{proof}

\section{Self-testing Schmidt states}
\label{AppendixB}
The Schmidt states we consider have the form
\begin{align}\label{Schmidt_SM}
\ket{\psi(\pmb{\alpha})}_{A\mathbf{B}}=\sum_{i=0}^{d-1}\alpha_{i}\ket{i}^{\otimes N},
\end{align}
where $d=\min \{d_A,d_{B_i}\}$ and $\pmb{\alpha}$ is a vector composed of the Schmidt coefficients $\alpha_i$, such that $0<\alpha_i<1$ for all $i$ and $\sum_{i=0}^{d-1}\alpha_i^2=1$. Next, we are going to introduce a steering operator that provides an inequality maximally violated by states of the form \eqref{Schmidt_SM}.

Let us now recall the steering operator for Schmidt states introduced in the main text:
\begin{eqnarray}\label{Stefn1}
\mathcal{S}_d(\pmb{\alpha},N)=\sum_{k=1}^{d-1}\left[\left(\sum_{j=1}^{N-1}A_0^{k}\otimes B_{j,0}^{k}\right) + \gamma(\pmb{\alpha}) A_1^{k}\otimes\bigotimes_{j=1}^{N-1} B_{j,1}^{k}+\delta_k(\pmb{\alpha})A_0^{k}\right],
\end{eqnarray}
where
\begin{eqnarray}\label{gammadelta}
\gamma(\pmb{\alpha})=\frac{d}{\sum_{\substack{i,j=0\\i\ne j}}^{d-1}\frac{\alpha_i}{\alpha_j}},\quad \delta_k(\pmb{\alpha})=-\frac{\gamma(\pmb{\alpha})}{d}\sum_{\substack{i,j=0\\i\ne j}}^{d-1}\frac{\alpha_i}{\alpha_j}\omega^{k(d-j)}.
\end{eqnarray}
In our scenario, we consider that all the $N-1$ Bobs perform two arbitrary measurements denoted by $B_{i_0}$ and $B_{i_1}$. Also, we assume that the local dimensions of the state they share are unknown. At the same time, because Alice's device is trusted, her measurements are completely characterized and the local dimension of the shared state on Alice's side is known. Alice's observables are the generalized Pauli $Z$ and $X$ operators
\begin{align}
    Z_d=\sum_{i=0}^{d-1}\omega^i\ket{i}\bra{i},\qquad X_d=\sum_{i=0}^{d-1}\ket{i+1}\bra{i}.
\end{align} 
The corresponding steering inequality is given by
\begin{eqnarray}\label{steinsch}
    \langle \mathcal{S}_d(\pmb{\alpha},N)\rangle\leqslant\beta_L
\end{eqnarray}
where $\beta_L$ is the LHS bound given below.
\begin{fakt}
The local hidden state (LHS) bound $\beta_L$ of the steering functional $\langle  S_d(\pmb{\alpha})\rangle$ where $ S_d(\pmb{\alpha})$ is given in \eqref{Stefn1} is upper bounded by  
\begin{eqnarray}
    \beta_L\leqslant\max_{\substack{|\eta_0|,\ldots,|\eta_{d-1}|\\|\eta_0|^2+\ldots+|\eta_{d-1}|^2=1}}\left(d(N-1)\max_a{|\eta_a|^2}+\gamma(\pmb{\alpha})\left[\left(\sum_{i=0}^{d-1}|\eta_i|\right)^2-\sum_{i=0}^{d-1}\alpha_i\sum_{a=0}^{d-1}\frac{|\eta_a|^2}{\alpha_a}\right]\right)-(N-2).
\end{eqnarray}
Furthermore, $\beta_L<(d-1)(N-1)+1$.
\end{fakt}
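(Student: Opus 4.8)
The plan is to follow the template used in the proof of Fact~1. First I would invoke the factorisation of LHS expectation values for a trusted Alice from \cite{sarkar6}, namely $\langle A_x^k\otimes B_y^l\rangle_{\mathrm{LHS}}=\sum_\lambda p(\lambda)\langle A_x^k\rangle_{\rho_\lambda}\langle B_y^l\rangle_\lambda$, so that $\langle\mathcal{S}_d(\pmb{\alpha},N)\rangle_{\mathrm{LHS}}=\sum_\lambda p(\lambda)\langle\mathcal{S}_d(\pmb{\alpha},N)\rangle_{\lambda,\mathrm{LHS}}$. By convexity and $\sum_\lambda p(\lambda)=1$ it suffices to bound a single term $\langle\mathcal{S}_d(\pmb{\alpha},N)\rangle_{\lambda,\mathrm{LHS}}$, and since this is linear in the $d\times d$ density operator $\rho_\lambda$ once the classical Bob responses are fixed, the optimal $\rho_\lambda$ may be taken pure, $\rho_\lambda=\proj{\eta}$ with $\ket{\eta}=\sum_a\eta_a\ket{a}$ and $\sum_a|\eta_a|^2=1$. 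The remaining classical data are, for each $j$, a probability distribution $q^{(j)}$ on $\mathbb{Z}_d$ whose Fourier coefficients are $\langle B_{j,0}^k\rangle_\lambda$, and a probability distribution $s$ on $\mathbb{Z}_d$ (the law of $\sum_j b_j\bmod d$ for inputs $\mathbf{1}$) whose Fourier coefficients are $\langle\bigotimes_j B_{j,1}^k\rangle_\lambda$; identifying these correctly is exactly where the trust in Alice alone is used.

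Next I would bound the three groups of terms of $\mathcal{S}_d(\pmb{\alpha},N)$ separately. For $\sum_j\sum_k\langle Z^k\rangle_\rho\langle B_{j,0}^k\rangle_\lambda$: completing each $k$-sum to $k=0,\dots,d-1$ and using that $\sum_k\omega^{(a+b)k}$ equals $d$ when $a+b\equiv0\ (\mathrm{mod}\ d)$ and vanishes otherwise gives, per $j$, $\sum_{k=1}^{d-1}\langle Z^k\rangle_\rho\langle B_{j,0}^k\rangle_\lambda=d\sum_a|\eta_a|^2 q^{(j)}_{-a}-1\le d\max_a|\eta_a|^2-1$, so this group is at most $d(N-1)\max_a|\eta_a|^2-(N-1)$. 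For $\gamma(\pmb{\alpha})\sum_k\langle X^k\rangle_\rho\langle\bigotimes_j B_{j,1}^k\rangle_\lambda$: using $X^k\ket{a}=\ket{a+k}$, a discrete Fourier computation gives $\sum_{k=0}^{d-1}\langle X^k\rangle_\rho\,\widehat{s}(k)=\sum_c s_c\big|\sum_a\eta_a\omega^{-ca}\big|^2\le\big(\sum_a|\eta_a|\big)^2$, so this group is at most $\gamma(\pmb{\alpha})\big[(\sum_a|\eta_a|)^2-1\big]$. For $\sum_k\delta_k(\pmb{\alpha})\langle Z^k\rangle_\rho$: substituting the definition of $\delta_k(\pmb{\alpha})$, using that $\sum_k\omega^{k(d-j+a)}$ is $d$ or $0$ according as $a\equiv j$ or not (mod $d$), and $\tfrac{\gamma(\pmb{\alpha})}{d}\sum_{i\ne j}\tfrac{\alpha_i}{\alpha_j}=1$ (which also yields $\delta_0(\pmb{\alpha})=-1$), this group equals exactly $-\gamma(\pmb{\alpha})(\sum_i\alpha_i)\sum_a|\eta_a|^2/\alpha_a+\gamma(\pmb{\alpha})+1$. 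Summing the three contributions, the $\pm\gamma(\pmb{\alpha})$ terms cancel and one lands precisely on $d(N-1)\max_a|\eta_a|^2+\gamma(\pmb{\alpha})\big[(\sum_i|\eta_i|)^2-(\sum_i\alpha_i)\sum_a|\eta_a|^2/\alpha_a\big]-(N-2)$; maximising over unit vectors $\eta$ yields the displayed bound on $\beta_L$.

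For the strict inequality $\beta_L<(d-1)(N-1)+1$, note that $(d-1)(N-1)+1+(N-2)=d(N-1)$, so it is enough to show the maximand is $<d(N-1)$. The bracket $g(\eta):=(\sum_i|\eta_i|)^2-(\sum_i\alpha_i)\sum_a|\eta_a|^2/\alpha_a$ is $\le0$ by Cauchy--Schwarz applied to $u_i=\sqrt{\alpha_i}$, $v_i=|\eta_i|/\sqrt{\alpha_i}$, and $\gamma(\pmb{\alpha})>0$; hence the maximand is $\le d(N-1)\max_a|\eta_a|^2\le d(N-1)$. Equality would require $\max_a|\eta_a|^2=1$, i.e.\ $\eta$ concentrated at a single index $a_0$, together with $g(\eta)=0$; but on such $\eta$, $g(\eta)=1-(\sum_i\alpha_i)/\alpha_{a_0}<0$ since all $\alpha_i>0$ and $d\ge2$, a contradiction. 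By compactness the maximum over the unit sphere is attained and hence strictly below $d(N-1)$, so $\beta_L<(d-1)(N-1)+1$.

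I expect the main obstacle to be the Fourier bookkeeping in the $\delta_k(\pmb{\alpha})$ group: one must track the $\omega$-phases and orthogonality relations carefully enough that its $\gamma(\pmb{\alpha})$-contribution cancels exactly against the one produced by the $X$-group, which is what makes the final expression collapse to the claimed form. A secondary point requiring care is verifying that the $B_{j,0}$ responses really are Fourier transforms of honest probability distributions on $\mathbb{Z}_d$, which is where the one-sided (trusted-Alice) nature of the scenario is genuinely used. The strict-inequality step, by contrast, is short once one spots that $g(\eta)\le0$ by Cauchy--Schwarz and analyses the equality case.
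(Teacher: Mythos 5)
Your proposal is correct and follows essentially the same route as the paper's proof of Fact 3: reduce via the LHS decomposition and convexity to a single pure state $\ket{\eta}=\sum_a\eta_a\ket{a}$ on Alice's side, bound the $Z$-, $X$-, and $\delta_k$-groups separately to land on the displayed maximand, and obtain strictness from Cauchy--Schwarz plus an analysis of the equality case. The only differences are presentational (you work with correlators and Fourier duality where the paper first rewrites everything in the probability picture, and your equality-case argument via $g(\eta)=1-\sum_i\alpha_i/\alpha_{a_0}<0$ is a slightly more direct variant of the paper's proportionality argument), so no further commentary is needed.
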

\begin{proof}
In order to determine the LHS bound $\beta_L$ of the inequality in Eq. \eqref{Stefn1}, we first express the functional $\mathcal{S}_d(\pmb{\alpha},N)$ in the probability picture as following
\begin{align}
   S_d(\pmb{\alpha})=\ & d\sum_{j=1}^{N-1}\sum_{a,b_j=0}^{d-1}c_{a,b_j}p(a,b_j|0,0)+\gamma(\pmb{\alpha})\left(d\sum_{a,\mathbf{b}=0}^{d-1}c_{a,\mathbf{b}}p(a,\mathbf{b}|1,1,\ldots,1)-\sum_{\substack{i,a=0\\i\neq a}}^{d-1}\alpha_i\frac{p(a|0)}{\alpha_a}\right)\nonumber \\
   &-(N-1)-\gamma(\pmb{\alpha})-\delta_0(\pmb{\alpha}),
\end{align}
where $c_{a,b}=1$ if $a\oplus_d b=0$ and 0 otherwise, where $\oplus_d$ is the addition modulo $d$. Also, $c_{a,\mathbf{b}}=1$ if $a\oplus_d b_1\oplus_d \ldots \oplus_d b_{N-1}=0$ and 0 otherwise. We implement this notation because $\sum_{k=0}^{d-1}\omega^{kn(a-b)}=d\delta_{a,b}$ for any $n$. Likewise, $\sum_{k=0}^{d-1}\omega^{k(d+a-b)}=d\delta_{a,b}$. We note, from Eq. \eqref{gammadelta} that $\delta_0(\pmb{\alpha})=-1$, which results in
\begin{align}
   S_d(\pmb{\alpha})=\ &d\sum_{j=1}^{N-1}\sum_{a,b_j=0}^{d-1}c_{a,b_j}p(a,b_j|0,0)+\gamma(\pmb{\alpha})\left(d\sum_{a,\mathbf{b}=0}^{d-1}c_{a,\mathbf{b}}p(a,\mathbf{b}|1,1,\ldots,1)-\sum_{i=0}^{d-1}\alpha_i\sum_{a=0}^{d-1}\frac{p(a|0)}{\alpha_a}\right)-(N-2),
\end{align}

We consider the LHS model in which the correlations are given as
\begin{align}
   p(a,\mathbf{b}|x,\mathbf{y})=\sum_{\lambda\in \Lambda}p(\lambda)p(a|x,\rho^{\lambda}_A)p(\mathbf{b}|\mathbf{y},\lambda),  
\end{align}
where $\Lambda$ is some set of hidden variables $\lambda$ shared between Alice and all the Bobs with probability distribution $p(\lambda)$, whereas $p(a|x,\rho^{\lambda}_A)$ and $p(\mathbf{b}|\mathbf{y},\lambda)$ are local probability distributions. Recall that Alice's distributions are quantum and depend on the hidden state $\rho^{\lambda}_A$. The LHS model transforms our steering functional as
\begin{align}
   S_d(\pmb{\alpha})=&d\sum_{j=1}^{N-1}\sum_{a=0}^{d-1}\sum_{\lambda\in \Lambda}p(\lambda)p(a|0,\rho^{\lambda}_A)p(d-a|0,\lambda)+\gamma(\pmb{\alpha})\Biggl(d\sum_{a=0}^{d-1}\sum_{\lambda \in \Lambda}p(\lambda)p(a|1,\rho^{\lambda}_A)p(\mathbf{b}|1,\ldots,1,\lambda) \nonumber \\& -\sum_{i=0}^{d-1}\alpha_i\sum_{a=0}^{d-1}\frac{p(a|0,\rho_A)}{\alpha_a} \Biggr)-(N-2),  
\end{align}
where $b_j=d-a$ for every $j$ in the first term, $a\oplus_d b_1\oplus_d\ldots \oplus_d b_{N-1}=0$ in the second term, and in the last term we have used the fact that Alice's distributions are quantum and substituted $\rho_A=\sum_{\lambda}\rho^{\lambda}_A$. Now, we note that the first two terms in the above expression are bounded from above in the following way
\begin{align}
 \sum_{a=0}^{d-1}\sum_{\lambda\in \Lambda}p(\lambda)p(a|0,\rho^{\lambda}_A)p(d-a|0,\lambda)\leqslant   \sum_{\lambda\in \Lambda}p(\lambda)\max_a{p(a|0,\rho^{\lambda}_A)}\leqslant\max_{\ket{\psi}\in \mathcal{C}^d}\max_a{p(a|0,\ket{\psi})},\nonumber \\
 \sum_{a=0}^{d-1}\sum_{\lambda \in \Lambda}p(\lambda)p(a|1,\rho^{\lambda}_A)p(\mathbf{b}|1,\ldots,1,\lambda)\leqslant   \sum_{\lambda\in \Lambda}p(\lambda)\max_a{p(a|1,\rho^{\lambda}_A)}\leqslant\max_{\ket{\psi}\in \mathcal{C}^d}\max_a{p(a|1,\ket{\psi})},
\end{align}
where to get the first inequalities in both cases we used the fact that probabilities are normalized. To get the second inequality we exploit the fact that Alice's probability distributions are quantum and hence linear functions of the state. Substituting this in our function, we get
\begin{align}
    S_d(\pmb{\alpha})\leqslant\max_{\psi\in \mathcal{C}^d}\left[d(N-1)\max_a{p(a|0,\ket{\psi})}+\gamma(\pmb{\alpha})\left(d \max_ap(a|1,\ket{\psi})-\sum_{i=0}^{d-1}\alpha_i\sum_{a=0}^{d-1}\frac{p(a|0,\ket{\psi})}{\alpha_a}\right)\right]-(N-2),
\end{align}
where we have used the fact that the maximization over the state $\ket{\psi}$ is happening over the whole expression. We can further write the state $\ket{\psi}$ in the computational basis as $\ket{\psi}=\sum_i\eta_i\ket{i}$ and then substitute the explicit forms of Alice's observables $X$ and $Z$, which results in 
\begin{align}\label{upperbound_Schmidt}
    S_d(\pmb{\alpha})\leqslant\max_{\substack{|\eta_0|,\ldots,|\eta_{d-1}|\\|\eta_0|^2+\ldots+|\eta_{d-1}|^2=1}}\left(d(N-1)\max_a{|\eta_a|^2}+\gamma(\pmb{\alpha})\left[\left(\sum_{i=0}^{d-1}|\eta_i|\right)^2-\sum_{i=0}^{d-1}\alpha_i\sum_{a=0}^{d-1}\frac{|\eta_a|^2}{\alpha_a}\right]\right)-(N-2).
\end{align}
If we can now show that the terms inside the square bracket are less than zero, we can conclude that $S_d(\pmb{\alpha})\leqslant (d-1)(N-1)+1$. To do this, we notice the following
\begin{align}\label{Schwarz_ineq}
    \left(\sum_{i=0}^{d-1}|\eta_i|\right)^2=\left(\sum_{i=0}^{d-1}\sqrt{\alpha_i}\frac{|\eta_i|}{\sqrt{\alpha_i}}\right)^2\leqslant\sum_{i=0}^{d-1}\alpha_i\sum_{j=0}^{d-1}\frac{|\eta_j|^2}{\alpha_j},
\end{align}
where we have used the Cauchy-Schwarz inequality to get the last expression. Consequently, we have $S_d(\pmb{\alpha})\leqslant (d-1)(N-1)+1$. However, from Eq. \eqref{upperbound_Schmidt} we see that $S_d(\pmb{\alpha})= (d-1)(N-1)+1$ is possible iff $\max_a|\eta_a|^2=1$ as well as the terms inside the square brackets should also vanish. However, this is possible only if $\alpha_i=\lambda \eta_i$. Because $\sum_i\alpha_i^2=\sum_i|\eta_i|^2=1$, we must have $\lambda=1$. However, $\alpha_i<1$ for al $i$, which does not allow for $\max_a|\eta_a|^2=1$ to hold. Therefore, $S_d(\pmb{\alpha})= (d-1)(N-1)+1$ is never possible, which implies $S_d(\pmb{\alpha})< (d-1)(N-1)+1$.
\end{proof}
Let us now find the quantum bound of the steering functional $\langle \mathcal{S}_d(\pmb{\alpha},N)\rangle$.

\begin{fakt}
    The quantum bound of the steering functional $\langle \mathcal{S}_d(\pmb{\alpha},N)\rangle$ where $\mathcal{S}_d(\pmb{\alpha},N)$ is given in \eqref{Stefn1} is given by $\beta_Q=(d-1)(N-1)+1$ and is achieved by the Schmidt state $\ket{\psi(\alpha)}$ \eqref{Schmidt_SM} with the observables 
    \begin{equation}
    B_{j,0}=Z,\qquad  B_{j,1}=X.
\end{equation}
\end{fakt}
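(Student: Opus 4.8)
The plan is to establish the two halves $\beta_Q\le(d-1)(N-1)+1$ and $\beta_Q\ge(d-1)(N-1)+1$ separately: the first as the operator inequality $\mathcal S_d(\pmb\alpha,N)\le\big[(d-1)(N-1)+1\big]\I$ valid for any admissible choice of Bob observables, the second by exhibiting a strategy that saturates it (which also pins down the optimal state as the Schmidt state \eqref{Schmidt_SM}).

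For the upper bound I would split $\mathcal S_d(\pmb\alpha,N)=\sum_{j=1}^{N-1}\mathcal{T}_j+\mathcal R$, with $\mathcal{T}_j:=\sum_{k=1}^{d-1}A_0^k\otimes B_{j,0}^k$ and $\mathcal R:=\sum_{k=1}^{d-1}\big[\gamma(\pmb\alpha)\,A_1^k\otimes\bigotimes_{j=1}^{N-1}B_{j,1}^k+\delta_k(\pmb\alpha)\,A_0^k\big]$, and bound the two pieces independently. Since $A_0=Z_d$ has spectrum $\{1,\omega,\dots,\omega^{d-1}\}$ and $B_{j,0}$ is unitary with $B_{j,0}^d=\I$, the operator $A_0\otimes B_{j,0}$ is unitary with $(A_0\otimes B_{j,0})^d=\I$; hence $\Pi_j:=\tfrac1d\sum_{k=0}^{d-1}(A_0\otimes B_{j,0})^k$ is an orthogonal projector, $\mathcal{T}_j=d\,\Pi_j-\I\le(d-1)\I$, and so $\sum_{j}\mathcal{T}_j\le(d-1)(N-1)\I$. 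For $\mathcal R$ the key point is that $C:=\bigotimes_{j=1}^{N-1}B_{j,1}$ is itself a single unitary observable on $\mathcal{H}_{\mathbf{B}}$ with $C^d=\I$, so $\mathcal R=\sum_{k=1}^{d-1}\big[\gamma(\pmb\alpha)\,A_1^k\otimes C^k+\delta_k(\pmb\alpha)\,A_0^k\big]$ has the structure of a single-untrusted-party steering block; the requisite inequality $\mathcal R\le\I$ — the heart of the argument — would be obtained from a sum-of-squares decomposition in the spirit of Ref.~\cite{sarkar3}, i.e. showing $\gamma(\pmb\alpha)\sum_{k=1}^{d-1}A_1^k\otimes D^k+\sum_{k=1}^{d-1}\delta_k(\pmb\alpha)\,A_0^k\le\I$ for every unitary $D$ with $D^d=\I$ and then setting $D=C$. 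As $\mathcal S_d(\pmb\alpha,N)$ is Hermitian (from \eqref{gammadelta} one checks $\overline{\delta_k(\pmb\alpha)}=\delta_{d-k}(\pmb\alpha)$ and $\gamma(\pmb\alpha)\in\mathbb{R}$), this gives $\lambda_{\max}\big(\mathcal S_d(\pmb\alpha,N)\big)\le\sum_j\lambda_{\max}(\mathcal{T}_j)+\lambda_{\max}(\mathcal R)\le(d-1)(N-1)+1$.

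For the lower bound I would substitute the reference strategy: Alice's trusted $A_0=Z_d$, $A_1=X_d$, the Bob observables $B_{j,0}=Z_d$ and $B_{j,1}=X_d$ (up to the local unitary implicit in the canonical form \eqref{Schmidt_SM}), and the Schmidt state $\ket{\psi(\pmb\alpha)}$. Then each $A_0^k\otimes B_{j,0}^k$ acts as $\I$ on $\ket{\psi(\pmb\alpha)}$, so the first group of terms contributes $(d-1)(N-1)$; a short computation using $X_d^k\ket{i}=\ket{i+k}$ together with $\sum_{k=0}^{d-1}\omega^{km}=d$ if $d\mid m$ and $0$ otherwise shows that the $\gamma$-term equals $\gamma(\pmb\alpha)\sum_{i\ne i'}\alpha_i\alpha_{i'}$ and, after inserting the explicit $\delta_k(\pmb\alpha)$ from \eqref{gammadelta}, that the $\delta_k$-terms sum to $1-\gamma(\pmb\alpha)\sum_{i\ne i'}\alpha_i\alpha_{i'}$. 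The $\gamma$-dependent parts cancel and the three contributions add to $(d-1)(N-1)+1$, so the bound is attained. Combined with the first half, $\beta_Q=(d-1)(N-1)+1$.

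The calculations in the last two paragraphs are routine: the projector identity for $\mathcal{T}_j$, the root-of-unity sums, and direct substitution of the definitions \eqref{gammadelta}. The genuine obstacle is the operator inequality $\mathcal R\le\I$ — equivalently, producing the sum-of-squares certificate for the one-trusted, one-untrusted steering block — since this is the step that genuinely exploits the precise form of $\gamma(\pmb\alpha)$ and $\delta_k(\pmb\alpha)$ (and the complementarity of $Z_d$ and $X_d$): a crude triangle-inequality estimate of $\|\mathcal R\|$ would only yield a bound of order $\gamma(\pmb\alpha)(d-1)$, far too weak.
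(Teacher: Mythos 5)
Your decomposition $\mathcal S_d(\pmb\alpha,N)=\sum_j\mathcal T_j+\mathcal R$ is exactly the paper's split into $\sum_{k}\sum_j A_0^k\otimes B_{j,0}^k$ and $T_d(\pmb\alpha,N)$, your projector argument for $\mathcal T_j\le (d-1)\I$ is fine (the paper just uses $|\langle A_0^k\otimes B_{j,0}^k\rangle|\le 1$), and your attainability computation checks out. But the proof as written has a genuine gap at precisely the point you flag: the inequality $\mathcal R\le\I$ is asserted to follow from a sum-of-squares certificate ``in the spirit of'' Ref.~\cite{sarkar3} that you never construct. Since you yourself note that this is the only step that uses the precise forms of $\gamma(\pmb\alpha)$ and $\delta_k(\pmb\alpha)$, deferring it to an unproduced certificate leaves the theorem unproven.

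The paper closes this step without any SOS machinery, by a direct and elementary computation that you could have carried out: write an arbitrary state as $\ket{\psi_N}=\sum_{i=0}^{d-1}\lambda_i\ket{i}_A\ket{e_i}_{\mathbf B}$ with $\lambda_i\ge 0$ and $\ket{e_i}$ normalized but not necessarily orthogonal, and evaluate $\bra{\psi_N}T_d\ket{\psi_N}$ term by term. The $A_1^k=X_d^k$ factor forces the cross terms $\lambda_i\lambda_{i-k}\bra{e_i}\bigotimes_j B_{j,1}^k\ket{e_{i-k}}$, which are bounded by $\lambda_i\lambda_{i-k}$ using $\mathrm{Re}(z)\le|z|$ and unitarity of $\bigotimes_j B_{j,1}^k$; the $\delta_k$ sum is evaluated in closed form using $\sum_{k=0}^{d-1}\omega^{k(l-j)}=d\,\delta_{l,j}$ and $\delta_0(\pmb\alpha)=-1$, yielding $1+\gamma(\pmb\alpha)-\gamma(\pmb\alpha)\sum_{i,j}\tfrac{\alpha_i}{\alpha_j}\lambda_j^2$. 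Altogether
$\bra{\psi_N}T_d\ket{\psi_N}\le 1+\gamma(\pmb\alpha)\bigl[\bigl(\sum_i\lambda_i\bigr)^2-\sum_i\alpha_i\sum_j\lambda_j^2/\alpha_j\bigr]$,
and the bracket is $\le 0$ by the same Cauchy--Schwarz step $\bigl(\sum_i\sqrt{\alpha_i}\cdot\lambda_i/\sqrt{\alpha_i}\bigr)^2\le\sum_i\alpha_i\sum_j\lambda_j^2/\alpha_j$ already used for the LHS bound in Fact~3. Your observation that $C=\bigotimes_j B_{j,1}$ is a single unitary with $C^d=\I$, reducing $\mathcal R$ to a bipartite block, is correct and would let you import the bipartite bound of Ref.~\cite{sarkar3} wholesale --- but then you must actually state and invoke that result (or reproduce the computation above); as it stands the central inequality is missing. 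A minor further point: on the Schmidt state the stabilizing choice is really $B_{j,0}=Z^{\dagger}$ (so that $Z^k\otimes (Z^{\dagger})^k\ket{ii}=\ket{ii}$); the paper shares this imprecision, but your claim that ``each $A_0^k\otimes B_{j,0}^k$ acts as $\I$'' with $B_{j,0}=Z$ literally is false as written.
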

\begin{proof}
Let us rewrite our functional as \begin{align}\label{SandT}
    \mathcal{S}_d(\pmb{\alpha},N)=\sum_{k=1}^{d-1}\left(\sum_{j=1}^{N-1}A_0^k\otimes B_{j,0}^k\right)+T_{d}(\pmb{\alpha},N)
\end{align}
with
\begin{align}\label{T_expression}
    T_{d}(\pmb{\alpha},N)=\sum_{k=1}^{d-1}\left(\gamma(\pmb{\alpha}) A_1^{k}\otimes\bigotimes_{j=1}^{N-1} B_{j,1}^{k}+\delta_k(\pmb{\alpha})A_0^{k}\right).
\end{align}
Since, $A_0$ is unitary and $(B^{k}_{j|0})^{\dagger}B^{k}_{j|0}\leqslant \mathbbm{1}$ for any $k,j$, we have $|\bra{\psi_N}A_0^k\otimes B_{j,0}^k\ket{\psi_N}|\leqslant 1$ for all $k \in \{1,\ldots,d-1\}$, $j \in \{1,\ldots,N-1\}$ and $\ket{\psi_N}$. This implies that
\begin{align}\label{maxSandT}
    \max_{\ket{\psi_N}}\bra{\psi_N}\mathcal{S}_d(\pmb{\alpha},N)\ket{\psi_N}\leqslant (d-1)(N-1)+\max_{\ket{\psi_N}}\bra{\psi_N}T_d(\pmb{\alpha},N)\ket{\psi_N}
\end{align}
Therefore, we only need to prove that $\bra{\psi_N}T_d(\pmb{\alpha},N)\ket{\psi_N}\leqslant 1$. We can represent any pure state in the Hilbert space $\mathcal{H}_A\bigotimes_{i=1}^{N-1}\mathcal{H}_{B_i}$ as 
\begin{align}
    \ket{\psi_N}_{A\mathbf{B}}=\sum_{i=0}^{d-1}\lambda_i\ket{i}_A\ket{e_{i}}_{\mathbf{B}},
\end{align}
where $\lambda_i\geqslant 0$ and $\lambda_0^2+\ldots+\lambda_{d-1}^2=1$, and $\ket{e_i}$ are some vectors in the Hilbert space $\bigotimes_{i=1}^{N-1}\mathcal{H}_{B_j}$ such that $\ket{e_{i}}$ are not necessarily orthogonal. With this representation we can rewrite $\bra{\psi_N}T_d(\pmb{\alpha},N)\ket{\psi_N}$ as
\begin{eqnarray}\label{Schmidt_2ndpart}
    \bra{\psi_N}T_d(\pmb{\alpha},N)\ket{\psi_N}&=&\sum_{k=1}^{d-1}\sum_{i,j=0}^{d-1}\bigg[\gamma(\pmb{\alpha})\lambda_i\lambda_j\bra{i}A_1^k\ket{j}\bra{e_{i}}\bigotimes_{j=1}^{N-1} B_{j,1}^{k}\ket{e_{j}}+\delta_k(\pmb{\alpha})\lambda_i\lambda_j\bra{i}A_0^k\ket{j}\braket{e_{i}}{e_{j}}\bigg], \nonumber \\
    &=&\sum_{k=1}^{d-1}\sum_{i=0}^{d-1}\bigg[\gamma(\pmb{\alpha})\lambda_i\lambda_{i-k}\bra{e_{i}}\bigotimes_{j=1}^{N-1} B_{j,1}^{k}\ket{e_{j}}+\delta_k(\pmb{\alpha})\lambda_i^2\omega^{ik}\bigg],
\end{eqnarray}
where for the second equality we have substituted $A_i$ from \eqref{Aliceobservables}. Next, we note that 
\begin{align}
\sum_{k=1}^{d-1}\sum_{i=0}^{d-1}\delta_k(\pmb{\alpha})\lambda_i^2\omega^{ik}=1+\sum_{k=0}^{d-1}\sum_{i=0}^{d-1}\delta_k(\pmb{\alpha})\lambda_i^2\omega^{ik} 
=1+\gamma(\pmb{\alpha})-\gamma(\pmb{\alpha})\sum_{i,j=0}^{d-1}\frac{\alpha_i}{\alpha_j}\lambda_j^2.
\end{align}
where we get the first equality by using $\delta_0(\pmb{\alpha})=-1$ and the second equality by using the explicit form of $\delta_k(\pmb{\alpha})$.By substituting the above expression in Eq. \eqref{Schmidt_2ndpart}, we get
\begin{align}
    \bra{\psi_N}T_d(\pmb{\alpha},N)\ket{\psi_N}=1+\gamma(\pmb{\alpha})\sum_{k=0}^{d-1}\sum_{i=0}^{d-1}\lambda_i\lambda_{i-k}\text{Re}\left(\bra{e_{i}}\bigotimes_{j=1}^{N-1} B_{j,1}^{k}\ket{e_{j}}\right)-\gamma(\pmb{\alpha})\sum_{i,j=0}^{d-1}\frac{\alpha_i}{\alpha_j}\lambda_j^2,
\end{align}
since $\mathcal{S}_d$ is an Hermitian operator and every $\lambda_i$ is positive. In this expression we use the fact that Re($z$)$\leqslant |z|$ and  $(B^{k}_{j|0})^{\dagger}B^{k}_{j|0}=\mathbbm{1}$ for any $k,j$, to obtain
\begin{align}
    \bra{\psi_N}T_d(\pmb{\alpha},N)\ket{\psi_N} \leqslant 1+\gamma(\pmb{\alpha})\left[\left(\sum_{i=0}^{d-1}\lambda_i^2\right)^2-\sum_{i=0}^{d-1}\alpha_i\sum_{j=0}^{d-1}\frac{\lambda_j^2}{\alpha_j}\right].
\end{align}

We see that the term in square brackets has the same form as the terms in square brackets of Eq. \eqref{upperbound_Schmidt}, which is upper bounded by 0. This implies that 
\begin{align}\label{Tupper}
    \bra{\psi_N}T_d(\pmb{\alpha},N)\ket{\psi_N}\leqslant 1\implies \mathcal{S}_d(\pmb{\alpha},N)\leqslant (d-1)(N-1)+1.
\end{align}
It is now simple to observe that with the Schmidt state $\ket{\psi(\pmb{\alpha})}$ from \eqref{Schmidt_SM} and observables 
    \begin{equation}
    B_{j,0}=Z,\qquad  B_{j,1}=X,
\end{equation}
one can attain the value $\langle\mathcal{S}_d(\pmb{\alpha},N)\rangle=(d-1)(N-1)+1$. This completes the proof.
\end{proof}


Consequently, if all the parties attain the quantum bound $(d-1)(N-1)+1$ of $\mathcal{S}_d(\pmb{\alpha},N)$, the following conditions hold true
\begin{align}
    \bra{\psi_N}A_0^k\otimes B_{j,0}^k\ket{\psi_N}=1,\qquad
    \bra{\psi_N}T_d(\pmb{\alpha},N)\ket{\psi_N}=1,
\end{align}
for every $j \in \{1,\ldots,N-1\}$ and every $k \in \{1,\ldots,d-1\}$, where $T_d(\pmb{\alpha},N)$ is given in \eqref{T_expression}. These conditions hold because of Eqs. \eqref{SandT}, \eqref{maxSandT}, and \eqref{Tupper}. As a result, we have the following conditions for the state $\ket{\psi_N}$ and observables $A_i$ and $B_{j_i}$
\begin{align}
        A_0^k\otimes B_{j,0}^k\ket{\psi_N}=\ket{\psi_N},\label{stabilizersSchmidt1}\\ 
        \sum_{k=1}^{d-1}\left(\gamma(\pmb{\alpha}) A_1^{k}\substack{{N-1}\\ \bigotimes\\j=1} B_{j,1}^{k}+\delta_k(\pmb{\alpha})A_0^{k}\right)\ket{\psi_N}=\ket{\psi_N}\label{stabilizersSchmidt2}.
\end{align}
for every $j \in \{1,\ldots,N-1\}$ and every $k \in \{1,\ldots,d-1\}$. Since $A_i's$ and $B_i's$ are unitary observables, we can rewrite Eq. \eqref{stabilizersSchmidt1} as
\begin{eqnarray}\label{cond1}
\left(A_0^{(-k)}\otimes \I\right)\ket{\psi_N}=\left(\I\otimes B_{j,0}^{(k)}\right)\ket{\psi_N}
\end{eqnarray}
The measurements chosen by Alice are $A_0=Z$ and $A_1=X$. Thus, we have the following relation among Alice's observables 
\begin{eqnarray}\label{comm1}
A_1A_0=\omega A_0A_1
\end{eqnarray}
where $\omega=e^{\frac{2\pi i}{d}}$. As we are about to present, the above relations will be particularly useful for self-testing the Schmidt states.

\begin{thm}
Assume that the steering inequality \eqref{steinsch} is maximally violated when the $0^\text{th}$ party Alice chooses the observables $A_i$ defined to be $A_0=Z$, $A_1=X$ and all the other parties choose their observables as $B_{j,i}$ for $i\in \{0,1\}$ and $j=1,2\ldots,N-1$ acting $\mathcal{H}_{B_j}$. Let us say that the state which attains this violation is given by $\ket{\psi_N}_{A\mathbf{B}E}\in\mathbbm{C}^d\otimes\bigotimes_{i=1}^{N-1}\mathcal{H}_{B_i}\otimes\mathcal{H}_E$, then the following statement holds true for any $d$:  $\mathcal{H}_{B_i}=\mathcal{H}_{B'_i}\otimes\mathcal{H}_{B''_i}$, where $\mathcal{H}_{B'_j} \equiv \mathbbm{C}^d$, $\mathcal{H}_{B''_j}$ is some finite-dimensional Hilbert space, and there exists a local unitary transformation on Bob's side $U_j:\mathcal{H}_{B_j}\rightarrow\mathcal{H}_{B_j}$, such that
\begin{eqnarray}\label{Theo1.1sch}
\forall j, \quad U_j\,B_{j,0}\,U_j^{\dagger}=Z_{B'_j}\otimes\I_{B''_j},\qquad U_j\,B_{j,1}\,U_j^{\dagger}=X_{B'_j}\otimes\I_{B''_j}
\end{eqnarray}
where ${B''}$ denotes Bob's auxiliary system and the state   $\ket{\psi_{N}}$ is given by,
\begin{eqnarray}\label{Theo1.2sch}
\left(\I_{AE}\otimes \bigotimes_{i=1}^{N-1}U_j\right)\ket{\psi_N}_{A\mathbf{B}E}=\ket{\psi(\pmb{\alpha})}_{A\mathbf{B}'}
\otimes\ket{\xi}_{\mathbf{B}''E},
\end{eqnarray}
where $\ket{\xi}_{\mathbf{B}''E}\in\bigotimes_{j=1}^{N-1}\mathcal{H}_{B_j''}\otimes\mathcal{H}_E$ denotes the auxiliary state.
\end{thm}
\begin{proof}
We begin by taking \eqref{stabilizersSchmidt2} and multiplying it by $B_{l,1}\otimes\I$ on both sides,
\begin{eqnarray}\label{procond1}
\left(\sum_{k=1}^{d-1}\gamma(\pmb{\alpha}) A_1^{k}\otimes B_{l,1}^{k+1}\substack{{N-1}\\\bigotimes\\j=1\\j\ne l} B_{j,1}^{k}\right)\ket{\psi_N}=\left(\I-\sum_{k=1}^{d-1}\delta_k(\pmb{\alpha})A_0^{k}\right)\otimes B_{l,1}\ket{\psi_N}
\end{eqnarray}
Now, we multiply the above by $B_{l,0}\otimes\I$ on both the sides,
\begin{eqnarray}\label{procond2}
\left(\sum_{k=1}^{d-1}\delta_k(\pmb{\alpha}) A_1^{k}\otimes B_{l,0}B_{l,1}^{k+1}\substack{{N-1}\\\bigotimes\\j=1\\j\ne l} B_{j,1}^{k}\right)\ket{\psi_N}=\left(\I-\sum_{k=1}^{d-1}\delta_k(\pmb{\alpha})A_0^{k}\right)\otimes B_{l,0}B_{l,1}\ket{\psi_N}
\end{eqnarray}

Similarly, we multiply both the sides of \eqref{procond1} by $A_0^{-1}\otimes\I$ to have 
\begin{eqnarray}
\left(\sum_{k=1}^{d-1}\gamma(\pmb{\alpha}) A_0^{-1}A_1^{k}\otimes B_{l,1}^{k+1}\substack{{N-1}\\\bigotimes\\j=1\\j\ne l} B_{j,1}^{k}\right)\ket{\psi_N}=\left(\I-\sum_{k=1}^{d-1}\delta_k(\pmb{\alpha})A_0^{k}\right)A_0^{-1}\otimes B_{l,1}\ket{\psi_N}.
\end{eqnarray}
By using the commutation relation \eqref{comm1}, and since $B_{j,1}\otimes\I$ and $A_0\otimes\I$ commute, we get
\begin{eqnarray}
\left(\sum_{k=1}^{d-1}\gamma(\pmb{\alpha}) \omega^{k}A_1^{k}\otimes B_{l,1}^{k+1}\substack{{N-1}\\\bigotimes\\j=1\\j\ne l} B_{j,1}^{k}\right)A_0^{-1}\ket{\psi_N}=\left(\I-\sum_{k=1}^{d-1}\delta_k(\pmb{\alpha})A_0^{k}\right)A_0^{-1}\otimes B_{l,1}\ket{\psi_N}
\end{eqnarray}
In the above equation we use \eqref{cond1} for $j=l$, which gives
\begin{eqnarray}\label{procond3}
\left(\sum_{k=1}^{d-1}\gamma(\pmb{\alpha}) \omega^{k}A_1^{k}\otimes B_{l,1}^{k+1}B_{l,0}\substack{{N-1}\\\bigotimes\\j=1\\j\ne l} B_{j,1}^{k}\right)\ket{\psi_N}=\left(\I-\sum_{k=1}^{d-1}\delta_k(\pmb{\alpha})A_0^{k}\right)\otimes B_{l,1}B_{l,0}\ket{\psi_N}.
\end{eqnarray}
We can proceed now by multiplying \eqref{procond3} by $\omega$ and subtracting it from \eqref{procond2} as follows
\begin{eqnarray}\label{procond4}
\left(\sum_{k=1}^{d-1}\gamma(\pmb{\alpha}) A_1^{k}\otimes (B_{l,0}B_{l,1}^{k+1}-\omega^{k+1}B_{l,1}^{k+1}B_{l,0})\substack{{N-1}\\\bigotimes\\j=1\\j\ne l} B_{j,1}^{k}\right)\ket{\psi_N}=\left(\I-\sum_{k=1}^{d-1}\delta_k(\pmb{\alpha})A_0^{k}\right)\otimes (B_{l,0}B_{l,1}-\omega B_{l,1}B_{l,0})\ket{\psi_N}
\end{eqnarray}
Next, we again take \eqref{stabilizersSchmidt2} and multiply it by $A_1^{-1}\otimes\I$ on both sides,
\begin{eqnarray}\label{procond1_}
\left(\sum_{k=1}^{d-1}\gamma(\pmb{\alpha}) A_1^{k-1}\otimes B_{l,1}^{k}\substack{{N-1}\\\bigotimes\\j=1\\j\ne l} B_{j,1}^{k}-\delta_k(\pmb{\alpha})A_1^{-1}A_0^{k}\right)\ket{\psi_N}=A_1^{-1}\ket{\psi_N}.
\end{eqnarray}
We multiply the above by $B_{l,0}\otimes\I$, use the fact that $B_{l,0}\otimes\I$ commutes with $A_1\otimes\I$, and then use \eqref{cond1} for $k=1$ to obtain
\begin{eqnarray}\label{procond5}
\left(\sum_{k=1}^{d-1}\gamma(\pmb{\alpha}) A_1^{k-1}\otimes B_{l,0}B_{l,1}^{k}\substack{{N-1}\\\bigotimes\\j=1\\j\ne l} B_{j,1}^{k}-\delta_k(\pmb{\alpha})A_1^{-1}A_0^{k-1}\right)\ket{\psi_N}=A_1^{-1}A_0^{-1}\ket{\psi_N}.
\end{eqnarray}
Now, we multiply by $A_0^{-1}\otimes\I$ on both the sides of \eqref{procond1_}
\begin{eqnarray}
\left(\sum_{k=1}^{d-1}\gamma(\pmb{\alpha}) A_0^{-1}A_1^{k-1}\otimes B_{l,1}^{k}\substack{{N-1}\\\bigotimes\\j=1\\j\ne l} B_{j,1}^{k}-\delta_k(\pmb{\alpha})A_0^{-1}A_1^{-1}A_0^{k}\right)\ket{\psi_N}=A_0^{-1}A_1^{-1}\ket{\psi_N}
\end{eqnarray}
Using the commutation relation \eqref{comm1}, the condition \eqref{cond1}, and the fact that $B_{j,1}\otimes\I$ and $A_0\otimes\I$ commute, we obtain that
\begin{eqnarray}\label{procond6}
\left(\sum_{k=1}^{d-1}\gamma(\pmb{\alpha})\omega^{k-1}A_1^{k-1}\otimes B_{l,1}^{k}B_{l,0}\substack{{N-1}\\\bigotimes\\j=1\\j\ne l} B_{j,1}^{k}-\omega^{-1} \delta_k(\pmb{\alpha})A_1^{-1}A_0^{k-1}\right)\ket{\psi_N}=\omega^{-1} A_1^{-1}A_0^{-1}\ket{\psi_N}.
\end{eqnarray}
We proceed now by multiplying \eqref{procond5} by $\omega$ and subtracting it from \eqref{procond6} to obtain 
\begin{eqnarray}
\left(\sum_{k=1}^{d-1}\gamma(\pmb{\alpha}) A_1^{k-1}\otimes (B_{l,0}B_{l,1}^{k}-\omega^{k}B_{l,1}^{k}B_{l,0})\substack{{N-1}\\\bigotimes\\j=1\\j\ne l} B_{j,1}^{k}\right)\ket{\psi_N}= 0,
\end{eqnarray}
which can be broken up as 
\begin{eqnarray}
\left(\sum_{k=2}^{d-1}\gamma(\pmb{\alpha}) A_1^{k-1}\otimes (B_{l,0}B_{l,1}^{k}-\omega^{k}B_{l,1}^{k}B_{l,0})\substack{{N-1}\\\bigotimes\\j=1\\j\ne l} B_{j,1}^{k}\right)\ket{\psi_N}= -\gamma(\pmb{\alpha})\left( (B_{l,0}B_{l,1}-\omega B_{l,1}B_{l,0})\substack{{N-1}\\\bigotimes\\j=1\\j\ne l} B_{j,1}\right)\ket{\psi_N}.
\end{eqnarray}
The above can be further simplified as
\begin{eqnarray}\label{procond7}
\left(\sum_{k=2}^{d-1}\gamma(\pmb{\alpha})A_1^{k-1}\otimes (B_{l,0}B_{l,1}^{k}-\omega^{k}B_{l,1}^{k}B_{l,0})\substack{{N-1}\\\bigotimes\\j=1\\j\ne l} B_{j,1}^{k-1}\right)\ket{\psi_N}= -\gamma(\pmb{\alpha})\left( B_{l,0}B_{l,1}-\omega B_{l,1}B_{l,0}\right)\ket{\psi_N}.
\end{eqnarray}
Now, we can see that the l.h.s. of \eqref{procond4} and \eqref{procond7} are the same. Thus, we can conclude that
\begin{eqnarray}
\left(\I-\sum_{k=1}^{d-1}\delta_k(\pmb{\alpha})A_0^{k}\right)\otimes (B_{l,0}B_{l,1}-\omega B_{l,1}B_{l,0})\ket{\psi_N}=-\gamma(\pmb{\alpha})\left( B_{l,0}B_{l,1}-\omega B_{l,1}B_{l,0}\right)\ket{\psi_N}
\end{eqnarray}
We finally get
\begin{eqnarray}
\left(\I(1+\gamma(\pmb{\alpha}))-\sum_{k=1}^{d-1}\delta_k(\pmb{\alpha})A_0^{k}\right)\otimes (B_{l,0}B_{l,1}-\omega B_{l,1}B_{l,0})\ket{\psi_N}=0.
\end{eqnarray}
It can be shown that if $(P\otimes Q ).v=0$ for any $v\neq 0$ and $P$ is invertible, then we can conclude that $Q=0$. We can use this fact to show that $\left(\I(1+\gamma(\pmb{\alpha}))-\sum_{k=1}^{d-1}\delta_k(\pmb{\alpha})A_0^{k}\right)$ is invertible. Indeed, by expanding it term by term, we get
\begin{eqnarray}
\left(\I(1+\gamma(\pmb{\alpha}))-\sum_{k=1}^{d-1}\delta_k(\pmb{\alpha})A_0^{k}\right)=\sum_{l=0}^{d-1}\left((1+\gamma(\pmb{\alpha}))-\sum_{k=1}^{d-1}\delta_k(\pmb{\alpha})\omega^{kl}\right)\ket{l}\bra{l}.
\end{eqnarray}
Notice that the above matrix is not invertible if any of the elements is 0. Thus, we solve the r.h.s. of the above equation for the $l^\text{th}$ element
\begin{align}
(1+\gamma(\pmb{\alpha}))+\frac{\gamma(\pmb{\alpha})}{d}\left(\sum_{k=1}^{d-1}\sum_{\substack{i,j=0\\i\ne j}}^{d-1}\frac{\alpha_i}{\alpha_j}\omega^{k(l-j)}\right)&=(1+\gamma(\pmb{\alpha}))+\frac{\gamma(\pmb{\alpha})}{d}\left(\sum_{\substack{i=0\\i\ne l}}^{d-1}\frac{\alpha_i}{\alpha_l}(d-1)-\sum_{\substack{i,j=0\\i\ne j,j\ne l}}^{d-1}\frac{\alpha_i}{\alpha_j}\right)\nonumber\\
&=(1+\gamma(\pmb{\alpha}))+\gamma(\pmb{\alpha})\left(\sum_{\substack{i=0}}^{d-1}\frac{\alpha_i}{\alpha_l}-\frac{1}{d}\sum_{\substack{i,j=0}}^{d-1}\frac{\alpha_i}{\alpha_j}\right)
\end{align}
where we have used the fact that $\sum_{k=0}^{d-1}\omega^{k(l-j)}=d\delta_{l,j}$. Substituting $\gamma(\pmb{\alpha})$ from \eqref{gammadelta}, we get
\begin{align}
(1+\gamma(\pmb{\alpha}))+\gamma(\pmb{\alpha})\left(\sum_{\substack{i=0}}^{d-1}\frac{\alpha_i}{\alpha_l}-\frac{1}{d}\sum_{\substack{i,j=0}}^{d-1}\frac{\alpha_i}{\alpha_j}\right)&=\gamma(\pmb{\alpha})\left(1+\frac{1}{d}\sum_{\substack{i,j=0\\i\ne j}}^{d-1}\frac{\alpha_i}{\alpha_j}-\frac{1}{d}\sum_{\substack{i,j=0}}^{d-1}\frac{\alpha_i}{\alpha_j}+\sum_{\substack{i=0}}^{d-1}\frac{\alpha_i}{\alpha_l}\right),\nonumber\\
&=\gamma(\pmb{\alpha})\left(\frac{1}{d}\sum_{\substack{i,j=0}}^{d-1}\frac{\alpha_i}{\alpha_j}-\frac{1}{d}\sum_{\substack{i,j=0}}^{d-1}\frac{\alpha_i}{\alpha_j}+\sum_{\substack{i=0}}^{d-1}\frac{\alpha_i}{\alpha_l}\right)
=\gamma(\pmb{\alpha})\left(\sum_{\substack{i=0}}^{d-1}\frac{\alpha_i}{\alpha_l}\right)>0.
\end{align}
Thus, we can conclude that the matrix $\left(\I(1+\gamma(\pmb{\alpha}))-\sum_{k=1}^{d-1}\delta_k(\pmb{\alpha})A_0^{k}\right)$ is invertible. Therefore, we have 
\begin{eqnarray}
\I\otimes (B_{l,0}B_{l,1}-\omega B_{l,1}B_{l,0})\ket{\psi_N}=0
\end{eqnarray}
Finally, we can conclude that $B_{l,0}B_{l,1}=\omega B_{l,1}B_{l,0}$. As was shown in \cite{Jed1}, if two observables $B_0$ and $B_1$ follow the above relation, then there exists a unitary matrix $U$ such that
\begin{eqnarray}
UB_{l,0}U^{\dagger}= Z^{\dagger}_{B_l'}\otimes\I_{B_l''},\quad UB_{l,1}U^{\dagger}=X_{B_l'}\otimes\I_{B_l''}.
\end{eqnarray}
From the above relations, we can use \eqref{stabilizersSchmidt1} and \eqref{stabilizersSchmidt2} to show that the state which saturates the quantum bound is the Schmidt state of local dimension $d$.

Now, let us prove that the state maximally violating the steering inequality must be equivalent to the Schmidt state, up to local unitary transformations. By taking a general state of the form $\ket{\psi_N}=\sum_{i,j_1,j_2,\ldots,j_{N-1}=0}^{d-1}\ket{ij_1j_2\ldots j_{N-1}}\ket{\psi_{ij_1j_2\ldots j_{N-1}}}$, we can plug them in \eqref{stabilizersSchmidt1} with $k=1$ for the $l^\text{th}$ Bob to obtain
\begin{eqnarray}
(Z\otimes Z^{\dagger}_{l}\otimes\I)\left(\sum_{i,j_1,j_2,\ldots,j_{N-1}=0}^{d-1}\ket{ij_1j_2\ldots j_{N-1}}\ket{\psi_{ij_1j_2\ldots j_{N-1}}}\right)=\sum_{i,j_1,j_2,\ldots,j_{N-1}=0}^{d-1}\ket{ij_1j_2\ldots j_{N-1}}\ket{\psi_{ij_1j_2\ldots j_{N-1}}},
\end{eqnarray}
which is equivalent to,
\begin{eqnarray}
\sum_{i,j_1,j_2,\ldots,j_{N-1}=0}^{d-1}\omega^{i-j_l}\ket{ij_1j_2\ldots j_{N-1}}\ket{\psi_{ij_1j_2\ldots j_{N-1}}}=\sum_{i,j_1,j_2,\ldots,j_{N-1}=0}^{d-1}\ket{ij_1j_2\ldots j_{N-1}}\ket{\psi_{ij_1j_2\ldots j_{N-1}}}.
\end{eqnarray}
For $i\neq j_l$, the above identity implies $\omega^{j_l-i}\ket{\psi_{ij_1j_2\ldots j_{N-1}}}=\ket{\psi_{ij_1j_2\ldots j_{N-1}}}$. This indicates that
\be \label{sf1}
\forall i,j_l, \text{ such that } i\neq j_l, \quad \ket{\psi_{ij_1j_2\ldots j_{N-1}}}=0. 
\ee 
Consequently, we must have $i=j_l$ for all $l$. Next, we can proceed similarly with Eq. \eqref{stabilizersSchmidt2} to obtain
\begin{eqnarray}
\left(\sum_{k=1}^{d-1}\gamma(\pmb{\alpha}) X^{k}\substack{{N-1}\\\bigotimes\\j=1} X^{k}\otimes\mathbbm{1}+\delta_k(\pmb{\alpha})Z^{k}\right)\left(\sum_{i=0}^{d-1}\ket{ii\ldots i}\ket{\psi_{ii\ldots i}}\right)=\sum_{i=0}^{d-1}\ket{ii\ldots i}\ket{\psi_{ii\ldots i}}.
\end{eqnarray}
After performing the operations, we get
\begin{eqnarray}
\sum_{i=0}^{d-1}\left(\sum_{k=1}^{d-1}\gamma(\pmb{\alpha})\ket{i+k}\ket{i+k}\ldots\ket{i+k}+\omega^{ki}\delta_k(\pmb{\alpha})\ket{ii\ldots i}\right)\ket{\psi_{ii\ldots i}}=\sum_{i=0}^{d-1}\ket{ii\ldots i}\ket{\psi_{ii\ldots i}}.
\end{eqnarray}
If we take the inner product of the above relation with $\bra{ss\ldots s}$, we obtain
\begin{eqnarray}
\sum_{k=1}^{d-1}\left(\gamma(\pmb{\alpha})\ket{\psi_{s-k,s-k\ldots ,s-k}}+\omega^{ks}\delta_k(\pmb{\alpha})\right)\ket{\psi_{ss\ldots s}}=\ket{\psi_{ss\ldots s}}.
\end{eqnarray}
From \eqref{gammadelta} we have that
\begin{eqnarray}
\sum_{k=1}^{d-1}\left(\ket{\psi_{s-k,s-k\ldots ,s-k}}-\frac{\omega^{ks}}{d}\sum_{\substack{i,j=0\\i\ne j}}^{d-1}\frac{\alpha_i}{\alpha_j}\omega^{k(d-j)}\right)\ket{\psi_{ss\ldots s}}=\frac{1}{d}\sum_{\substack{i,j=0\\i\ne j}}^{d-1}\frac{\alpha_i}{\alpha_j}\ket{\psi_{ss\ldots s}},
\end{eqnarray}
which can be simplified in the following way
\begin{eqnarray}
\sum_{k=1}^{d-1}\ket{\psi_{s-k,s-k\ldots ,s-k}}=\frac{\sum_{i=0}^{d-1}\alpha_i}{\alpha_s}\ket{\psi_{ss\ldots s}}.
\end{eqnarray}
Thus, for any $s$ we have
\begin{eqnarray}
\ket{\psi_{ss\ldots s}}=\frac{\sum_{k=1}^{d-1}\ket{\psi_{s-k,s-k\ldots ,s-k}}}{\sum_{i=0}^{d-1}\alpha_i}\alpha_s,\quad\forall s\in\{0,1,\ldots,d-1\}.
\end{eqnarray}
If we define
\be
\ket{\xi}_{\mathbf{B}''E}\coloneqq \frac{\sum_{k=1}^{d-1}\ket{\psi_{s-k,s-k\ldots ,s-k}}}{\sum_{i=0}^{d-1}\alpha_i},
\ee
then we can write
\begin{eqnarray}
\left(\I_{AE}\otimes\bigotimes_{i=1}^{N-1}U_j\right)\ket{\psi_N}_{A\mathbf{B}E}=\left(\sum_{i=0}^{d-1}\alpha_{i}\ket{i}^{\otimes N}\right)\otimes\ket{\xi}_{\mathbf{B}''E},
\end{eqnarray}
which completes the proof.
\end{proof}

\section{Self-test $N$-qubit generalized $W$-States}\label{AppendixC}

The $N$-qubit generalized $W$ states have the following form
\begin{eqnarray}\label{WN}
\ket{\psi_{W_N}}=\sum_{l=0}^{N-1}\alpha_{l+1}\ket{0\ldots,0,1_l,0,\ldots,0}=\alpha_{1}\ket{10\ldots 00}+\alpha_{2}\ket{01\ldots 00}+\ldots +\alpha_N\ket{00\ldots 01},
\end{eqnarray}
such that $\alpha_i>0$ for every $i$ and $\sum_{i=1}^{N}\alpha_i^2=1$. We introduce a steering inequality that is maximally violated by the above state through the following steering operator
\begin{align}\label{steinWN}
\mathcal{W}_N(\pmb{\alpha})  = -2Z_A\otimes \bigotimes_{k=1}^{N-1}B_{k,0}&+\sum_{l=1}^{N-1}  \left[ Z_A \otimes \I_l\otimes\left(\I-P_l\right)    +\left(\gamma_l X_A \otimes B_{l,1}+\delta_lZ_A \otimes \I_{l}\right)\otimes P_l \right]\nonumber\\
&+\sum_{l=1}^{N-1}  \left[ \I_A \otimes B_{l,0}\otimes\left(\I-P_l\right)    +\I_A \otimes \I_l\otimes P_l \right],
\end{align}
where
\begin{align}\label{coeffN}
    \gamma_l=\frac{2\alpha_{l+1}\alpha_1}{\alpha_{l+1}^2+\alpha_1^2},\quad \delta_l=\frac{\alpha_{l+1}^2-\alpha_1^2}{\alpha_{l+1}^2+\alpha_1^2},\qquad \forall l\in \{1,\ldots,N-1\},
\end{align}
with 
\begin{eqnarray}\label{Pl}
   P_l=\frac{1}{2^{N-2}} \bigotimes_{k=1, k \neq l}^{N-1}\left(\I_k+ B_{k,0}\right)
\end{eqnarray}
and $B_{k,0}$ and $B_{k,1}$ are the measurements acting on the $k$-th qubit. In the same way, $\I_l$ is the identity operation acting on the $l$-th qubit. The corresponding steering inequality is given by
\begin{eqnarray}\label{inequalityWN}
    \langle \mathcal{W}_N(\pmb{\alpha})\rangle\leqslant\beta_L.
\end{eqnarray}
where $\beta_L$ is the LHS bound.

Let us now find the quantum bound of the steering functional $\langle \mathcal{W}_N(\pmb{\alpha})\rangle$.

\begin{fakt}\label{Qbound_Wstate}
    The quantum bound of the steering functional $\langle \mathcal{W}_N(\pmb{\alpha})\rangle$ where $\mathcal{W}_N(\pmb{\alpha})$ is given in \eqref{steinWN} is given by $\beta_Q=2N$ and is achieved by the $N$-qubit generalized $W$ state $\ket{\psi_{W_N}}$ \eqref{WN} with the observables 
    \begin{equation}
    B_{j,0}=Z,\qquad  B_{j,1}=X.
\end{equation}
\end{fakt}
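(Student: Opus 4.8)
The plan is to prove the operator inequality $\mathcal{W}_N(\pmb{\alpha})\preceq 2N\,\I$, valid for \emph{every} admissible choice of Bob's observables, and then to exhibit a state and observables saturating it; the two facts together give $\beta_Q=2N$. Throughout I use that, for $d=2$, each $B_{k,i}$ is a Hermitian involution, $B_{k,i}^{\dagger}=B_{k,i}$ and $B_{k,i}^2=\I$.

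For the upper bound, first note that each factor $\tfrac12(\I_k+B_{k,0})$ is an orthogonal projector, so $P_l$ in \eqref{Pl} is an orthogonal projector and $\I-P_l\succeq 0$. I would then view $\mathcal{W}_N(\pmb{\alpha})$ from \eqref{steinWN} as a sum of $2N$ pieces --- the single term $-2Z_A\otimes\bigotimes_k B_{k,0}$, and, for each $l\in\{1,\dots,N-1\}$, the two bracketed operators --- and bound the operator norm of each. The first piece has norm $2$ since $Z_A\otimes\bigotimes_k B_{k,0}$ is a Hermitian involution. The bracket in the third sum equals $B_{l,0}$ on $\mathrm{range}(\I-P_l)$ and $\I$ on $\mathrm{range}(P_l)$ (these ranges are orthogonal, and $B_{l,0}$, acting on qubit $l$, commutes with $P_l$), so it has norm $1$. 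The key point is the bracket in the second sum: using $X_AZ_A+Z_AX_A=0$ and $B_{l,1}^2=\I$ one checks that $R_l:=\gamma_l X_A\otimes B_{l,1}+\delta_l Z_A\otimes\I_l$ obeys $R_l^2=(\gamma_l^2+\delta_l^2)\I$, and from \eqref{coeffN} one has $\gamma_l^2+\delta_l^2=\frac{4\alpha_{l+1}^2\alpha_1^2+(\alpha_{l+1}^2-\alpha_1^2)^2}{(\alpha_{l+1}^2+\alpha_1^2)^2}=1$; hence $R_l$ is a Hermitian involution of norm $1$, and a block decomposition over the orthogonal ranges of $P_l$ and $\I-P_l$ gives norm $1$ for the whole bracket. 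Summing the $2N$ norm bounds yields $\langle\mathcal{W}_N(\pmb{\alpha})\rangle\leqslant 2+(N-1)+(N-1)=2N$.

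For the achievability half I would put $B_{k,0}=Z$ and $B_{k,1}=X$, so that $\tfrac12(\I+Z)=\ket{0}\!\bra{0}$ and $P_l=\bigotimes_{k\neq l}\ket{0}_k\!\bra{0}$. Writing $\ket{\psi_{W_N}}=\ket{\psi^{(1)}_l}+\ket{\psi^{(2)}_l}$ with $\ket{\psi^{(1)}_l}=P_l\ket{\psi_{W_N}}=\alpha_1\ket{1}_A\ket{0\cdots0}_{\mathbf B}+\alpha_{l+1}\ket{0}_A\ket{0\cdots1_l\cdots0}_{\mathbf B}$ and $\ket{\psi^{(2)}_l}=(\I-P_l)\ket{\psi_{W_N}}=\sum_{m\neq l}\alpha_{m+1}\ket{0}_A\ket{\cdots1_m\cdots}_{\mathbf B}$ (orthogonal), I would evaluate the three contributions term by term. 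The first term gives $-2\langle Z_A\otimes\bigotimes_kZ_k\rangle=2$, since every computational component of $\ket{\psi_{W_N}}$ has Hamming weight one. Each bracket of the third sum gives $\langle\psi^{(2)}_l|Z_l|\psi^{(2)}_l\rangle+\|\psi^{(1)}_l\|^2=(1-\alpha_1^2-\alpha_{l+1}^2)+(\alpha_1^2+\alpha_{l+1}^2)=1$, because qubit $l$ is in $\ket{0}$ throughout $\ket{\psi^{(2)}_l}$. Each bracket of the second sum gives $\langle\psi^{(2)}_l|Z_A|\psi^{(2)}_l\rangle+\langle\psi^{(1)}_l|R_l|\psi^{(1)}_l\rangle$; on $\ket{\psi^{(2)}_l}$ Alice is in $\ket{0}$ so the first summand is $1-\alpha_1^2-\alpha_{l+1}^2$, while on $\ket{\psi^{(1)}_l}$ the relevant two-qubit reduction is $\alpha_1\ket{10}+\alpha_{l+1}\ket{01}$, giving $\langle X_A\otimes X_l\rangle=2\alpha_1\alpha_{l+1}$ and $\langle Z_A\otimes\I_l\rangle=\alpha_{l+1}^2-\alpha_1^2$, so the second summand is $2\alpha_1\alpha_{l+1}\gamma_l+(\alpha_{l+1}^2-\alpha_1^2)\delta_l=\alpha_1^2+\alpha_{l+1}^2$ by \eqref{coeffN}; the bracket is again $1$. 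Adding up, $\langle\mathcal{W}_N(\pmb{\alpha})\rangle=2+(N-1)+(N-1)=2N$.

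I expect the main obstacle to be the bookkeeping in the achievability calculation: identifying which computational components of $\ket{\psi_{W_N}}$ survive $P_l$, and reducing the action of $\gamma_l X_A\otimes X_l+\delta_l Z_A\otimes\I_l$ to the two-dimensional subspace $\mathrm{span}\{\ket{1}_A\ket{0}_l,\ket{0}_A\ket{1}_l\}$ while tracking the $\alpha_i$'s; the norm bound on the second-sum bracket also hinges on the algebraic identity $\gamma_l^2+\delta_l^2=1$. A minor point deserving separate mention is the degenerate case $N=2$, where the product defining $P_l$ is empty so $P_l=\I$, $\I-P_l=0$, and the argument specializes cleanly with the first bracket reducing to $R_1$ and the third to $\I$.
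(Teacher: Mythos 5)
Your proposal is correct and follows essentially the same strategy as the paper's proof: split $\mathcal{W}_N(\pmb{\alpha})$ into its $2N$ pieces, bound the first by $2$ and each bracketed term by $1$ in operator norm using that $P_l$ is a projector commuting with the operators on $A$ and $B_l$, with the identity $\gamma_l^2+\delta_l^2=1$ doing the key work --- your observation that $R_l^2=(\gamma_l^2+\delta_l^2)\I=\I$ is a slightly cleaner route to the bound on the second-sum bracket than the paper's decomposition of $B_{l,1}$ into its $\pm1$ eigenprojectors and the eigenvalue computation for $\pm\gamma_l X_A+\delta_l Z_A$, but it rests on the same algebraic fact. You additionally carry out the saturation computation for $\ket{\psi_{W_N}}$ with $B_{k,0}=Z$, $B_{k,1}=X$ explicitly (and correctly), whereas the paper only asserts achievability.
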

\begin{proof}
    Before proceeding, let us recall here that $B_{j,0}$ are unitary observables as they correspond to projective measurements. Now, let us observe that the first term in the steering operator $\mathcal{W}_N$ \eqref{steinWN} is  
\begin{eqnarray}
   Z_A\otimes \bigotimes_{k=1}^{N-1}B_{k,0}\leqslant\I
\end{eqnarray}
where we used the fact that $| B_{k,0}|\leqslant\I$.
Similarly, one can straightforwardly conclude that for any $l$
\begin{eqnarray}
    \I_A \otimes B_{l,0}\otimes\left(\I-P_l\right)    +\I_A \otimes \I_l\otimes P_l\leqslant\I
\end{eqnarray}
Let us now observe that for any $l$, $B_{l,1}=\Pi_l^+-\Pi_l^-$ where $\Pi_l^+,\Pi_l^-$ denote the projector onto the positive and negative eigenspace respectively. Now, we observe that  
\begin{eqnarray}
  |\gamma_l X_A \otimes B_{l,1}+\delta_lZ_A \otimes \I_{l}|\leqslant   |(\gamma_l X_A +\delta_lZ_A) \otimes \Pi_{l}^+| +|(-\gamma_l X_A +\delta_lZ_A) \otimes \Pi_{l}^-|.
\end{eqnarray}
As $\Pi_l^+,\Pi_l^-$ are positive matrices, we can conclude from the above formula that
\begin{eqnarray}
    | \gamma_l X_A \otimes B_{l,1}+\delta_lZ_A \otimes \I_{l}|\leqslant \lambda^{(1)}_{\mathrm{max}}\I_A\otimes\Pi_{l}^+ + \lambda^{(2)}_{\mathrm{max}}\I_A\otimes\Pi_{l}^-
\end{eqnarray}
where $\lambda^{(1)}_{\mathrm{max}},\lambda^{(2)}_{\mathrm{max}}$ are the maximum eigenvalues of $|\gamma_l X_A +\delta_lZ_A|,|-\gamma_l X_A +\delta_lZ_A|$ respectively. One can straightaway evaluate that $\lambda^{(1)}_{\mathrm{max}}=\lambda^{(2)}_{\mathrm{max}}=1$ and thus we obtain that
\begin{eqnarray}
     |\gamma_l X_A \otimes B_{l,1}+\delta_lZ_A \otimes \I_{l}|\leqslant \I_{A,l}
\end{eqnarray}
Finally considering the second term in \eqref{steinWN} for any $l$, we have that
\begin{eqnarray}
    Z_A \otimes \I_l\otimes\left(\I-P_l\right)    +\left(\gamma_l X_A \otimes B_{l,1}+\delta_lZ_A \otimes \I_{l}\right)\otimes P_l\leqslant \I_{A,l}\otimes\left(\I-P_l\right)+ |\gamma_l X_A \otimes B_{l,1}+\delta_lZ_A \otimes \I_{l}|\otimes P_l\leqslant\I.
\end{eqnarray}
Thus, the maximum quantum value of the steering inequality \eqref{steinWN} is $2N$ and can be achieved when $B_{j,0}=Z$ and $B_{j,1}=X$ with the generalised $W$ state $\ket{\psi_{W_N}}$.
\end{proof}

To achieve the maximal violation of a steering inequality through operator \eqref{steinWN}, the following  conditions need to be satisfied:
\begin{eqnarray} \label{rel,1WN}
    Z_A\otimes \bigotimes_{k=1}^{N-1}B_{k,0}\ket{\psi}_{A\mathbf{B}}=-\ket{\psi}_{A\mathbf{B}}
\end{eqnarray}
and for all $l=1,\ldots,N-1$,
\begin{eqnarray}\label{rellWN}
    \left[ Z_A \otimes  \I_l\left(\I-P_l\right)+\left(\gamma_l X_A \otimes B_{l,1}
+\delta_lZ_A \otimes \I_{l}\right)\otimes P_l  \right] \ket{\psi}_{A\mathbf{B}}=\ket{\psi}_{A\mathbf{B}}
\end{eqnarray}
and,
\begin{eqnarray}\label{rellWNnew}
   \left[ \I_A \otimes B_{l,0}\otimes\left(\I-P_l\right)    +\I_A \otimes \I_l\otimes P_l \right]\ket{\psi}_{A\mathbf{B}}=\ket{\psi}_{A\mathbf{B}}.
\end{eqnarray}

Now, let us proceed to the self-testing statement.
\begin{thm}
Assume that the steering inequality \eqref{inequalityWN} is maximally violated when the $0^\text{th}$ party Alice chooses the observables $A_i$ defined to be $A_0=Z$, $A_1=X$ and all the other parties choose their observables as $B_{j,i}$ for $i\in \{0,1\}$ and $j=1,2\ldots,N-1$ acting $\mathcal{H}_{B_j}$. Let us say that the state which attains this violation is given by $\ket{\psi_N}_{A\mathbf{B}E}\in\mathbbm{C}^2\otimes\bigotimes_{j=1}^{N-1}\mathcal{H}_{B_j}\otimes\mathcal{H}_E$, then the following statement holds true:  $\mathcal{H}_{B_j}=\mathcal{H}_{B'_j}\otimes\mathcal{H}_{B''_j}$, where $\mathcal{H}_{B'_j} \equiv \mathbbm{C}^2$, $\mathcal{H}_{B''_j}$ is some finite-dimensional Hilbert space, and there exists a local unitary transformation on Bob's side $U_j:\mathcal{H}_{B_j}\rightarrow\mathcal{H}_{B_j}$, such that
\begin{eqnarray}\label{Theo3_measurments}
\forall j, \quad U_j\,B_{j,0}\,U_j^{\dagger}=Z_{B'_j}\otimes\I_{B''_j},\qquad U_j\,B_{j,1}\,U_j^{\dagger}=X_{B'_j}\otimes\I_{B''_j}
\end{eqnarray}
where ${B''}$ denotes Bob's auxiliary system and the state   $\ket{\psi_{N}}$ is given by,
\begin{eqnarray}\label{Theo3_states}
\left(\I_{AE}\otimes \bigotimes_{i=1}^{N-1}U_j\right)\ket{\psi_N}_{A\mathbf{B}E}=\ket{\psi_{W_N}}_{A\mathbf{B}'}
\otimes\ket{\xi}_{\mathbf{B}''E},
\end{eqnarray}
where $\ket{\xi}_{\mathbf{B}''E}\in\bigotimes_{j=1}^{N-1}\mathcal{H}_{B_j''}\otimes\mathcal{H}_E$ denotes the auxiliary state.
\end{thm}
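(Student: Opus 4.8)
The plan is to follow the same template used for the graph-state and Schmidt-state theorems: first extract operator identities from the maximal-violation conditions \eqref{rel,1WN}--\eqref{rellWNnew}, then use them to pin down the Bob observables up to a local unitary via the characterization of anticommuting-type pairs from \cite{Jed1}, and finally characterize the state by plugging the standard forms back into the stabilizer-like relations. I will work with the purified state $\ket{\psi_N}_{A\mathbf{B}E}$ throughout, suppressing the $E$ system since every operator acts trivially on it.

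\textbf{Step 1: Simplify the third-term relation.} Equation \eqref{rellWNnew} reads $[\I_A\otimes B_{l,0}\otimes(\I-P_l)+\I_A\otimes\I_l\otimes P_l]\ket{\psi_N}=\ket{\psi_N}$ for each $l$. Since $P_l$ and $\I-P_l$ are orthogonal projectors (note $P_l$ as defined in \eqref{Pl} is a projector once we know $B_{k,0}^2=\I$, which holds because the $B_{k,0}$ are $\pm1$-valued) that commute with $B_{l,0}$, this forces $(B_{l,0}-\I)(\I-P_l)\ket{\psi_N}=0$, i.e.\ on the range of $\I-P_l$ the observable $B_{l,0}$ acts as $+\I$. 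I expect this to be used to show $B_{l,0}$ has eigenvalues $\pm1$ and to relate the state's support to the $+1$ eigenspaces, much as $P_l$ is designed to project onto the ``all other qubits in state $+$'' subspace.

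\textbf{Step 2: Extract the key (anti)commutation relation on Bob $l$.} Combining \eqref{rel,1WN}, \eqref{rellWN} and Step 1, I would multiply \eqref{rellWN} by suitable operators (e.g.\ by $\I_A\otimes P_l$ and by $Z_A\otimes\I$) and use the explicit coefficients \eqref{coeffN}, together with the relation $ZX=\omega XZ$ with $\omega=-1$ for qubits, to derive a relation of the form $(B_{l,0}B_{l,1}+B_{l,1}B_{l,0})\ket{\psi_N}=0$, and hence, after tracing out everything except $B_l$ and invoking full-rank local supports, $\{B_{l,0},B_{l,1}\}=0$ as operators. This is the $d=2$ instance of the argument in Appendix~A/B. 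Then \cite{Jed1} gives a unitary $U_l$ with $U_lB_{l,0}U_l^\dagger=Z_{B_l'}\otimes\I_{B_l''}$ and $U_lB_{l,1}U_l^\dagger=X_{B_l'}\otimes\I_{B_l''}$, which is \eqref{Theo3_measurments}. I would remark that the argument must be run for each $l$ independently, and that because $B_{l,0},B_{l,1}$ appear in $P_k$ for $k\neq l$ only through $B_{k,0}$, there is no circularity.

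\textbf{Step 3: Characterize the state.} After applying $\bigotimes_j U_j$, all Bob operators are the standard qubit Pauli operators on the $\mathbb{C}^2$ factors, and $P_l$ becomes the projector $\bigotimes_{k\neq l}\proj{+}_k$ onto the $+1$ eigenspace of all $Z_k$, $k\neq l$ — wait, more precisely onto the $+$ state of $X_k$? I would recompute: $\tfrac12(\I_k+Z_k)=\proj{0}_k$, so $P_l=\bigotimes_{k\neq l}\proj{0}_k$. Writing $\ket{\psi_N}=\sum_{i,\mathbf{j}}\ket{i\mathbf{j}}\ket{\psi_{i\mathbf{j}}}$ on $A\otimes \mathbf{B}'\otimes(\mathbf{B}''E)$, I plug this into \eqref{rel,1WN} (which forces the $A$ and all $\mathbf{B}'$ qubits to have the same parity structure as $Z_A\bigotimes Z_k=-1$, killing many components) and into \eqref{rellWN}--\eqref{rellWNnew}; the relation from Step 1 kills components where some $B_k'$-qubit $k\neq l$ is in state $\ket{1}$ while another constraint is active, and the surviving components are exactly those with at most one qubit excited, i.e.\ the $W$-type support $\{\ket{10\cdots0},\ket{010\cdots0},\ldots\}$. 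The coefficients $\gamma_l,\delta_l$ in \eqref{coeffN} are precisely those making \eqref{rellWN} fix the ratio $\ket{\psi_{\text{qubit }l\text{ excited}}}:\ket{\psi_{\text{qubit }1\text{ excited}}} = \alpha_{l+1}:\alpha_1$, so collecting these yields $\ket{\psi_{W_N}}_{A\mathbf{B}'}\otimes\ket{\xi}_{\mathbf{B}''E}$ with $\ket{\xi}$ the common leftover vector, giving \eqref{Theo3_states}.

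\textbf{Main obstacle.} The delicate part is Step 3: the operator $P_l$ couples all the Bobs, so unlike the graph/Schmidt cases the relations \eqref{rellWN} are not simply local stabilizers, and one must carefully track how the ``ancilla-controlled'' structure $(\cdot)\otimes P_l+(\cdot)\otimes(\I-P_l)$ interacts with the component expansion of $\ket{\psi_N}$ to rule out all non-$W$ components and then fix the amplitudes via $\gamma_l,\delta_l$. A secondary subtlety is confirming that $P_l$ really is a projector after Step 2 (it is not obviously idempotent before one knows $B_{k,0}^2=\I$, which itself needs $\{B_{k,0},B_{k,1}\}=0$ from Step 2 — so Steps 1 and 2 may need to be interleaved, first deriving $B_{k,0}^2=\I$ from \eqref{rel,1WN} via Cauchy--Schwarz on the norm-$1$ terms as done for \eqref{ststab}).
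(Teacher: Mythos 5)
Your plan follows essentially the same route as the paper's Appendix C proof: project \eqref{rel,1WN} onto the range of $P_l$ to obtain $Z_A\otimes B_{l,0}\,P_l\ket{\psi}=-P_l\ket{\psi}$, combine this with the $P_l$-part of \eqref{rellWN} and $ZX=-XZ$ to get $\{B_{l,0},B_{l,1}\}=0$ (then \cite{Jed1} plus full-rank supports), and finally use the parity relation from \eqref{rel,1WN} together with the $(\I-P_l)$-parts of \eqref{rellWN}--\eqref{rellWNnew} to kill all non-weight-one components before fixing the amplitudes via $\gamma_l,\delta_l$. The only nitpick is your closing worry: $B_{k,0}^2=\I$ follows immediately from the dichotomic measurements being projective (as assumed in the paper), not from the anticommutation, so $P_l$ is a projector from the outset and no interleaving of your Steps 1 and 2 is needed.
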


\begin{proof}
For simplicity, we drop the lower indices from the state $\ket{\psi}_{A\mathbf{B}E}$. First, let us consider \eqref{rel,1WN} and rewrite it as
\begin{eqnarray}
      Z_A\otimes B_{l,0}\otimes \bigotimes_{k=1,k\ne l}^{N-1}B_{k,0}\ket{\psi}=-\ket{\psi}
\end{eqnarray}
and multiply it by $B_{l',0}$ for $l\ne l'$ and use the fact that $B_{l',0}^2=\I$. By adding the result to \eqref{rel,1WN} we obtain
\be
    Z_A\otimes B_{l,0} \otimes(\I+B_{l',0})\otimes\bigotimes_{k=1, k\neq l,l'}^{N-1}B_{k,0}\ket{\psi}=-\I \otimes  (\I+B_{l',0})\ket{\psi}.\label{ZBL0psi}
\ee
Then, multiplying \eqref{ZBL0psi} with $B_{l'',0}$ and then adding the resulting formula with \eqref{ZBL0psi}, we obtain
\begin{eqnarray}
     Z_A\otimes B_{l,0} \otimes(\I+B_{l',0})\otimes(\I+B_{l'',0})\otimes\bigotimes_{k=1, k\neq l,l',l''}^{N-1}B_{k,0}\ket{\psi}=-\I \otimes  (\I+B_{l',0})\otimes  (\I+B_{l'',0})\ket{\psi}.
\end{eqnarray}
Continuing in a similar manner, we get that
\begin{eqnarray}\label{ZBL0psi1}
     Z_A\otimes B_{l,0} \otimes P_l\ket{\psi}=-\I \otimes  P_l\ket{\psi}
\end{eqnarray}
where $P_l$ is given in Eq. \eqref{Pl}. Denoting $\ket{\Tilde{\psi}}\coloneqq P_l\ket{\psi}$, we have 
\begin{eqnarray}\label{ZBL0psi2}
     Z_A\otimes B_{l,0}\ket{\tilde{\psi}}=-\ket{\tilde{\psi}}.
\end{eqnarray}
Next, we consider \eqref{rellWN} for any $l$ and multiply it by $P_l$, which using the fact that $P_l^2=P_l$ gives us 
\begin{align}
\left(\gamma_l X_A \otimes B_{l,1}
+\delta_lZ_A \otimes \I_l\right)\ket{\tilde{\psi}}= \ket{\tilde{\psi}}
\end{align}
which, using the fact that $B_{l,1}$ is unitary, can be rearranged to
\begin{eqnarray}\label{gammaXx1psi}
  \gamma_l X_A \ket{\tilde{\psi}}=\left(\I- \delta_lZ_A \otimes \I_l\right)\otimes B_{l,1}\ket{\tilde{\psi}}
\end{eqnarray}
Now, let us multiply the above by $Z_A\otimes \I_l$ and use the fact that this term commutes with $\left(\I_A-\delta_lZ_A\right)\otimes B_{l,1}$. That gives us
\be
\gamma_l Z_A X_A\otimes \I_l\ket{\Tilde{\psi}}=\left[\left(\I_A-\delta_lZ_A\right)\otimes B_{l,1}\right] (Z_A\otimes\I_l)\ket{\Tilde{\psi}}
\ee
which utilising \eqref{ZBL0psi2} gives us
\be\label{ZX=B1B0}
\gamma_l Z_A X_A\otimes \I_l\ket{\Tilde{\psi}}=-\left(\I_A-\delta_lZ_A\right)\otimes B_{l,1}B_{l,0}\ket{\Tilde{\psi}}.
\ee
Now, let us consider \eqref{ZBL0psi2} and multiply it by $ \gamma_lX_A\otimes \I_l$. Similarly, we obtain
\be
\gamma_l X_AZ_A\otimes \I_l\ket{\Tilde{\psi}}=-(\I_A\otimes B_{l,0})(\gamma_l X_A\otimes \I_l)\ket{\Tilde{\psi}}.
\ee
Using\eqref{gammaXx1psi}, the above formula can be expressed as
\be\label{XZ=B0B1}
\gamma_l X_AZ_A\otimes \I_l\ket{\Tilde{\psi}}=-\left(\I_A-\delta_lZ_A\right)\otimes B_{l,0}B_{l,1} \ket{\Tilde{\psi}}.
\ee
By adding \eqref{XZ=B0B1} to \eqref{ZX=B1B0} and taking into account that $ZX+XZ=0$, we obtain
\be
(\I_A-\delta_lZ_A)\otimes(B_{l,1}B_{l,0}+B_{l,0}B_{l,1})\ket{\Tilde{\psi}}=0.
\ee
Because $\alpha_i\neq 0$ for any $i$, we have that $\delta_l\neq 1$ for any $l$. Therefore, $\I_A-\delta_lZ_A$ is invertible, which then using the fact that the local states are full-rank allows us to conclude from the above formula that
\be
\{B_{l,0},B_{l,1}\}=0.
\ee
Since the operators $B_{l,0}$ and $B_{l,1}$ anti-commute, we can use the result from \cite{Jed1} to guarantee that there exist unitaries $U_{B_l}$ for every $l=1,\ldots,N-1$ such that 
\begin{align}\label{Bl}
    U_{B_l}B_{l,0}U_{B_l}^{\dagger}=Z\otimes\I,\qquad U_{B_l}B_{l,1}U_{B_l}^{\dagger}=X\otimes\I.
\end{align}

Let us now find the state that maximally violates the steering inequality provided by the operator \eqref{steinWN}. As the Hilbert spaces of all parties decompose as $\mathbb{C}^2\otimes H_{B_j''}$, we consider a state in a general form as
\begin{align}\label{genstateN}
    \bigotimes_{l=1}^{N-1}U_{B_l}\ket{\psi}=\sum_{i_0,i_1,\ldots, i_{N-1}=0,1}\ket{i_0i_1\ldots i_{N-1}}\ket{\psi_{i_0i_1\ldots i_{N-1}}},
\end{align}
where $U_{B_l}$ are the unitaries in \eqref{Bl} and $\ket{\psi_{i_0i_1\ldots i_{N-1}}}$ are some unnormalized states. We can multiply \eqref{rel,1WN} by $\bigotimes_{l=1}^{N-1} U_{B_l}$ to obtain
\be
Z_A\otimes \bigotimes_{l=1}^{N-1}\left(U_{B_l}B_{l,0}U_{B_l}^\dagger\right) U_{B_l}\ket{\psi}=-\bigotimes_{l=1}^{N-1}U_{B_l}\ket{\psi}.
\ee
Now using Eq. \eqref{Bl} and putting in the state from \eqref{genstateN}, the above formula can be expressed as 
\be
  \sum_{i_0,i_1,\ldots, i_{N-1}=0,1}(-1)^{1+i_0+i_1+\ldots +i_{N-1}}\ket{i_0i_1\ldots i_{N-1}}\ket{\psi_{i_0i_1\ldots i_{N-1}}}=\sum_{i_0,i_1,\ldots, i_{N-1}=0,1}\ket{i_0i_1\ldots i_{N-1}}\ket{\psi_{i_0i_1\ldots i_{N-1}}}.
\ee
The above relation implies that the state \eqref{genstateN} is of the form
\be\label{state_sum=odd}
    \bigotimes_{l=1}^{N-1}U_{B_l}\ket{\psi}=  \sum_{i_0,i_1,\ldots, i_{N-1}=0,1}\ket{i_0i_1\ldots i_{N-1}}\ket{\psi_{i_0i_1\ldots i_{N-1}}} \quad \text{with} \quad i_0+i_1+\ldots +i_{N-1}=2n+1,
\ee
for some non-negative integer $n$.

Now, we consider relation \eqref{rellWN} for any $l$ and multiply it with $(\I-P_l)$ to obtain for any $l$
\begin{eqnarray}
    Z_A \otimes\I_l\otimes\left(\I-P_l\right) \ket{\psi}=\left(\I-P_l\right)\ket{\psi}.
\end{eqnarray}
As $B_{j,0}$ are certified in \eqref{Bl}, we obtain that $P_l=\proj{0\ldots0}\otimes\I_{B_1''\ldots B_{N-1}''}$. Now plugging in the state \eqref{state_sum=odd} in the above formula, we get that
\begin{eqnarray}
  \sum_{\substack{i_0,i_1,\ldots, i_{N-1}=0,1\\i_1+\ldots i_{N-1}-i_l\ne 0}} (-1)^{i_0}\ket{i_0i_1\ldots i_{N-1}}\ket{\psi_{i_0i_1\ldots i_{N-1}}}=\sum_{\substack{i_0,i_1,\ldots, i_{N-1}=0,1\\i_1+\ldots i_{N-1}-i_l\ne 0}} \ket{i_0i_1\ldots i_{N-1}}\ket{\psi_{i_0i_1\ldots i_{N-1}}}.
\end{eqnarray}
This implies from \eqref{state_sum=odd}  that 
\begin{eqnarray}\label{basisfix1}
    \ket{\psi_{1i_1\ldots i_{N-1}}}=0\quad \mathrm{s.t.} \quad i_1+\ldots+i_{N-1}=2n(n\ne0).
\end{eqnarray}
Now, considering \eqref{rellWNnew} for any $l$ and multiplying it with $\I-P_l$, we obtain
\begin{eqnarray}
    \I_A \otimes B_{l,0}\otimes\left(\I-P_l\right)\ket{\psi}=\left(\I-P_l\right)\ket{\psi}.
\end{eqnarray}
Now plugging in the state  \eqref{state_sum=odd} with the certified observables \eqref{Bl}, we obtain
\begin{eqnarray}
    \sum_{\substack{i_0,i_1,\ldots, i_{N-1}=0,1\\i_1+\ldots i_{N-1}-i_l\ne 0}} (-1)^{i_l}\ket{i_0i_1\ldots i_{N-1}}\ket{\psi_{i_0i_1\ldots i_{N-1}}}=\sum_{\substack{i_0,i_1,\ldots, i_{N-1}=0,1\\i_1+\ldots i_{N-1}-i_l\ne 0}}\ket{i_0i_1\ldots i_{N-1}} \ket{\psi_{i_0i_1\ldots i_{N-1}}}.
\end{eqnarray}
This implies from \eqref{state_sum=odd} that 
\begin{eqnarray}\label{basisfix2}
     \ket{\psi_{i_0i_1\ldots,1_l,\ldots, i_{N-1}}}=0\quad \mathrm{s.t.} \quad i_0+i_1+\ldots+i_{N-1}-i_l=2n(n\ne0).
\end{eqnarray}
It is now straightforward to observe from Eqs. \eqref{state_sum=odd},\eqref{basisfix1}, and \eqref{basisfix2} that the state $\ket{\psi}$ that attains the quantum bound of $\langle \mathcal{W}_N\rangle$ is of the form
\begin{eqnarray}\label{upstate}
     \bigotimes_{l=1}^{N-1}U_{B_l}\ket{\psi}=  \sum_{i_0,i_1,\ldots, i_{N-1}=0,1}\ket{i_0i_1\ldots i_{N-1}}\ket{\psi_{i_0i_1\ldots i_{N-1}}} \quad \text{with} \quad i_0+i_1+\ldots +i_{N-1}=1.
\end{eqnarray}

Let us now consider \eqref{rellWN} and multiply it with $P_l$ to obtain for any $l$
\begin{eqnarray}
    \left(\gamma_l X_A \otimes B_{l,1}
+\delta_lZ_A \otimes \I_{l}\right)\otimes P_l  \ket{\psi}=P_l\ket{\psi}.
\end{eqnarray}
Plugging in the state from \eqref{upstate} and observables from \eqref{Bl}, we obtain the following two conditions
\begin{align*}
    &\gamma_l\ket{\psi_{0\ldots 1_l\ldots 0_{N-1}}}-\delta_l\ket{\psi_{1\ldots 0_l\ldots 0_{N-1}}}=\ket{\psi_{1\ldots 0_l\ldots 0_{N-1}}}\\ 
    &\gamma_l\ket{\psi_{1\ldots 0_l\ldots 0_{N-1}}}+\delta_l\ket{\psi_{0\ldots 1_l\ldots 0_{N-1}}}=\ket{\psi_{0\ldots 1_l\ldots 0_{N-1}}}.
\end{align*}
Using the values of $\gamma_l,\delta_l$ from \eqref{coeffN}, the solution of these two relations is $\ket{\psi_{0\ldots 1_l\ldots 0_{N-1}}}=\frac{\alpha_{l+1}}{\alpha_{1}}\ket{\psi_{1\ldots 0_l\ldots 0_{N-1}}}$ for all $l=1,\ldots,N-1$.
Therefore, the state which maximally violated our steering inequalities has the following form  
\begin{align}\label{maxstateN}
    \left(\I_{AE}\otimes\bigotimes_{l=1}^{N-1}U_{B_l}\right)\ket{\psi}_{A\mathbf{B}E}=\left(\sum_{l=0}^{N-1}\alpha_{l+1}\ket{0\ldots,0,1_l,0,\ldots,0}\right)\otimes \left(\frac{1}{\alpha_1}\ket{\psi_{1,0,\ldots,0}}\right),
\end{align}
where $\ket{\xi}_{\mathbf{B}''E}=\frac{1}{\alpha_1}\ket{\psi_{1,0,\ldots,0}}$.
\end{proof}

\end{document}